\documentclass{article}

\usepackage{xspace,xcolor}
\usepackage{graphicx}
\usepackage{mdwlist}
\usepackage{amssymb,amsfonts,amsmath,amsthm}
\usepackage{hyperref}

\usepackage[margin=1.0in]{geometry}
\usepackage[noend,ruled]{algorithm2e}

\title{On Romeo and Juliet Problems: Minimizing Distance-to-Sight\footnote{This work by Ahn and Oh was supported by the MSIT (Ministry of Science and ICT), Korea, under the SW Starlab support program (IITP-2017-0-00905) supervised by the IITP (Institute for Information \& Communications Technology Promotion).
\newline An extended abstract of this paper appeared at SWAT'18.}}

\newtheorem*{problem*}{Problem}
\newtheorem{lemma}{Lemma}
\newtheorem{theorem}{Theorem}
\newtheorem{corollary}{Corollary}
\newtheorem{definition}{Definition}
\newcommand{\ccheck}[1]{#1}
\newcommand{\boundaryOfP}{\ensuremath{\partial P}\xspace}

\newcommand{\fig}[0]{Figure}
\begin{document}

\newcommand{\propone}{Property~\textbf{P1}}
\newcommand{\proptwo}{Property~\textbf{P2}}
\newcommand{\propthree}{Property~\textbf{P3}}

\newcommand{\email}[1]{\href{mailto:#1}{#1}}

  \author{Hee-Kap Ahn\thanks{Department of Computer Science and Engineering, POSTECH, Pohang, South Korea, \email{heekap@postech.ac.kr}}
  \and
  Eunjin Oh\thanks{Max Planck Institute for Informatics, Saarbr\"ucken, Germany, \email{eoh@mpi-inf.mpg.de}}
  \and
  Lena Schlipf\thanks{Theoretische Informatik, FernUniversit\"at in Hagen, Hagen, Germany, \email{lena.schlipf@fernuni-hagen.de}}
  \and
  Fabian Stehn\thanks{Institut f\"ur Informatik, Universit\"at Bayreuth, Bayreuth, Germany, \email{fabian.stehn@uni-bayreuth.de}}
  \and
  Darren Strash\thanks{Department of Computer Science, Hamilton College, Clinton, New York, USA, \email{dstrash@hamilton.edu}}
  }

\date{}
\maketitle

  \begin{abstract} We introduce a variant of the watchman route
    problem, which we call the \emph{quickest pair-visibility}
    problem.  Given two persons standing at points $s$ and $t$ in a
    simple polygon $P$ with no holes, we want to minimize the distance
    they travel in order to see each other in $P$. We solve two
    variants of this problem, one minimizing the longer distance the
    two persons travel (min-max) and one minimizing the total travel
    distance (min-sum), optimally in linear time.  We also consider a
    query version of this problem for the min-max variant. We can
    preprocess a simple $n$-gon in linear time so that the minimum of
    the longer distance the two persons travel can be computed in
    $O(\log^2 n)$ time for any two query positions $s,t$ where the two
    persons start.\\

\noindent\textbf{Keywords:} visibility polygon $\cdot$ shortest path $\cdot$ watchman problems
  \end{abstract}

\section{Introduction}
In the watchman route problem, a watchman takes a route to
\emph{guard} a given region---that is, any point in the region is
visible from at least one point on the route. It is desirable to make
the route as short as possible so that the entire area can be guarded
as quickly as possible. The problem was first introduced in 1986 by
Chin and Ntafos~\cite{Chin1986} and has been extensively studied in
computational geometry~\cite{Carlsson1999,Mitchell2013}.
Though the problem is NP-hard for polygons with
holes~\cite{Chin1986,Chin1988,Dumitrescu2012}, an optimal route can be
computed in time $O(n^3\log n)$ for simple $n$-gons~\cite{Dror2003}
when the tour must pass through a specified point, and $O(n^4\log n)$
time otherwise.
	
In this paper, we study a variant of the watchman route
  problem. Imagine two persons, Romeo and Juliet, travel in a region
  from their starting locations.
  They want to minimize the distance they travel in order to see each
  other. More precisely, given the region and the locations where
  Romeo and Juliet start, the objective is to compute their paths, one
  for Romeo and one for Juliet, such that they see each other after
  traveling along the paths and their travel distances are minimized.
  This problem can be formally defined as follows.
\begin{problem*}[quickest pair-visibility problem] Given two points
  $s$ and $t$ in a simple polygon $P$, compute the minimum distance
  that $s$ and $t$ must travel in order to see each other in $P$.
\end{problem*}
In the \textit{min-max} variant of the quickest
  pair-visibility problem, we want to minimize the longer distance
  that the two points travel to see each other.  In the
  \textit{min-sum} variant, we want to minimize the total travel
  distance that the two points travel to see each other.
	
This problem may sound similar to the shortest path problem between
$s$ and $t$, in which the objective is to compute the shortest path
$\pi(s,t)$ for $s$ to \emph{reach} $t$. However, they differ even for
a simple case: for any two points lying in a convex polygon, the
distance in the quickest pair-visibility problem is zero while in the
shortest path problem, it is their geodesic distance
$|\pi(s,t)|$. We would like to mention that our algorithm
  to be presented later uses the shortest path as a guide in computing
  the quickest pair-visibility paths.
		
The quickest pair-visibility problem occurs in optimization tasks.
For example, mobile robots that use a line-of-sight communication
model are required to move to mutually-visible positions to establish
communication~\cite{Ganguli2007}.  An optimization task here is to
find shortest paths for the robots to meet the visibility requirement
for establishing communication among them.
	
Wynters and Mitchell~\cite{Wynters1993} studied this problem for two
agents acting in a polygonal domain in the presence of polygonal
obstacles and gave an $O(nm)$-time algorithm for the min-sum variant
(where $n$ is the number of vertices of the polygonal domain, and $m$
is the number of edges of the visibility graph of all corners) and an
$O(n^3 \log{n})$-time algorithm for the min-max variant.
	
A query version of the quickest visibility problem has also been
studied~\cite{Arkin2016,Khosravi2005,Wang2017}.  In the query problem,
a polygon and a source point lying in the polygon are given, and the
goal is to preprocess them and construct a data structure that supports,
for a given query point, finding the shortest path taken from the
source point to see the query point efficiently. Khosravi and
Ghodsi~\cite{Khosravi2005} considered the case for a simple $n$-gon
and presented an algorithm to construct a data structure of $O(n^2)$
space so that, given a query, it finds the shortest visibility path in
$O(\log n)$ time.  Later, Arkin et al.~\cite{Arkin2016} improved the
result and presented an algorithm for the problem in a polygonal
domain.  Very recently, Wang~\cite{Wang2017} presented an improved
algorithm for this problem for the case that the number of the holes
in the polygon is relatively small.  Figure~\ref{fig:problems}(a)
illustrates differences in these problems for a simple polygon and two
points, $s$ and $t$, in the polygon.

\subsection{Our results}
In this paper, we consider both min-max and min-sum variants of the
quickest pair-visibility problem for a simple polygon. That is, either
we want to minimize the maximum length of two traveled paths (min-max)
or we want to minimize the sum of the lengths of two traveled paths
(min-sum).  We give a sweep-line-like approach that ``rotates'' the
lines-of-sight along vertices on the shortest path between the start
positions, allowing us to evaluate a linear number of candidate
solutions on these lines. Throughout the sweep, we encounter solutions
to both variants of the problem. We further show that our technique
can be implemented in linear time.
\begin{figure}[t]
  \centering \includegraphics[width=0.9\textwidth]{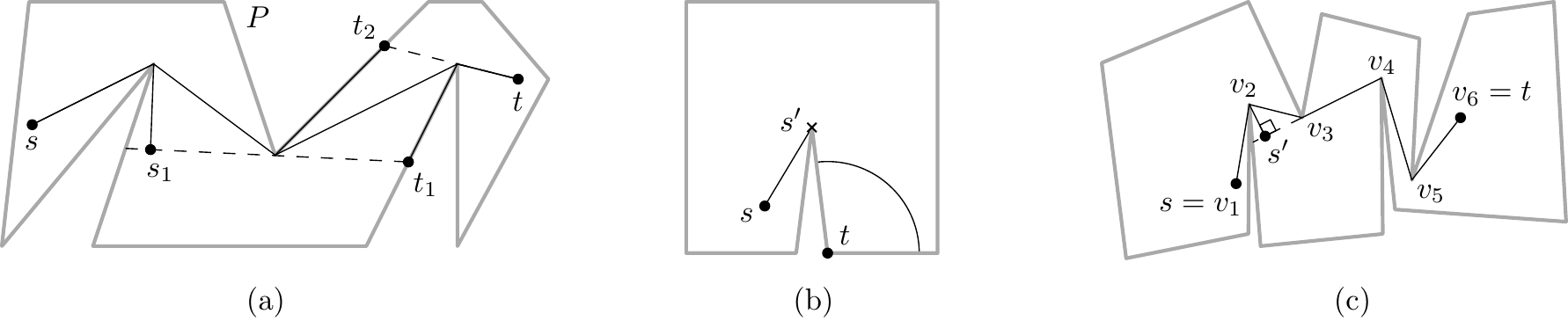}
  \caption{\small (a) The quickest pair-visibility problem finds two
    paths $\pi(s,s_1)$ and $\pi(t,t_1)$ such that
    $\overline{s_1t_1}\subset P$ and
    $\max\{|\pi(s,s_1)|,|\pi(t,t_1)|\}$ or $|\pi(s,s_1)|+|\pi(t,t_1)|$
    is minimized.  The quickest visibility problem for query point $t$
    finds a shortest $\pi(s,t_2)$ with $\overline{tt_2}\subset P$.
    (b) \textbf{min-max:} Every pair $(s',t^*)$, where $t^*$ is some
    point within the geodesic disk centered in $t$ with radius
    $\pi(s,s')$, is an optimal solution to the min-max
    problem. (c) \textbf{min-sum:} An instance where
      $|\pi(s,s')|+|\pi(t,v_4)|=|\pi(s,v_4)|+|\pi(t,v_5)|$. Therefore,
      both $(s',v_4)$ and $(v_4,v_5)$ are optimal solutions to the
      min-sum problem.}  
  \label{fig:problems}
\end{figure}
	
We also consider a query version of this problem for the min-max
variant. We can preprocess a simple $n$-gon in linear time so that the
minimum of the longer distance the two query points travel can be
computed in $O(\log^2 n)$ time for any two query points.
	
\section{Preliminaries}
Let $P$ be a simple polygon and \boundaryOfP be its boundary where
\boundaryOfP$\subset P$. The vertices of $P$ are given in
counter-clockwise order along \boundaryOfP.  We denote the shortest
path within $P$ between two points $p,q\in P$ by $\pi(p,q)$ and its
length by $|\pi(p,q)|$.  Likewise, we denote the shortest path within
$P$ between a point $p\in P$ and a line segment $\ell\subset P$ by
$\pi(p,\ell)$.  We say a point $p\in P$ is \emph{visible} from another
point $q\in P$ (and $q$ is visible from $p$) if and only if the line
segment $\overline{pq}$ is 
contained in $P$.
	
For two starting points $s$ and $t$, our task is to compute a pair
$(s',t')$ of points such that $s'$ and $t'$ are visible to each other,
where we wish to minimize the lengths of $\pi(s,s')$ and
$\pi(t,t')$. In the min-max setting, we wish to minimize
$\max\{|\pi(s,s')|,|\pi(t,t')|\}$. For the min-sum setting, we wish to
minimize $|\pi(s,s')| + |\pi(t,t')|$.  Note that, for both variants,
the optimum is not necessarily unique; see \fig~\ref{fig:problems}(b)
and (c).
	
We say a segment $g$ is \emph{tangent} to a path $\pi$ at a vertex $v$
if $v\in g\cap \pi$ and $v$'s neighboring vertices on $\pi$ are in a
closed half-plane bounded by the line containing $g$. Let
  $ \langle v_{0},v_{1},\ldots, v_{k-1},v_{k}\rangle$ be the sequence of vertices on
  $\pi(s,t)$ with $s=v_0$ and $t=v_k$.
	
\begin{lemma}
  \label{lemma:contains-vertex}
  Unless $s$ and $t$ are visible to each other, there is an optimal
  solution $(s^*,t^*)$ such that $\overline{s^*t^*}$ is tangent to the
  shortest path $\pi(s,t)$ at a vertex $v$ of $\pi(s,t)$.
\end{lemma}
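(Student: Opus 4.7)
The plan is to start with an arbitrary optimal pair $(s^*, t^*)$ and show that, if it does not already satisfy the tangency condition, we can replace it by another optimal pair that does. Let $\pi = \pi(s,t) = \langle v_0, v_1, \ldots, v_k\rangle$; since $s$ and $t$ are not mutually visible, $k \geq 2$, and each interior $v_i$ is a reflex vertex of $P$.

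The first step is to argue that in an optimal configuration, $s^*$ must lie on the boundary of the visibility polygon $V(t^*)$ of $t^*$, and symmetrically $t^*$ on $\partial V(s^*)$. If $s^*$ were in the relative interior of $V(t^*)$, then translating $s^*$ a small amount along the last edge of $\pi(s,s^*)$ back toward its penultimate vertex would keep $\overline{s^*t^*} \subset P$ while strictly decreasing $|\pi(s,s^*)|$; this contradicts optimality for min-sum, and (after applying the symmetric move to $t^*$) for min-max as well. Since $\partial V(t^*)$ is composed of pieces of $\partial P$ together with window chords emanating from $t^*$ through reflex vertices of $P$, either $s^* \in \partial P$ or the line supporting $\overline{s^*t^*}$ passes through some reflex vertex $v$ of $P$ that lies on $\overline{s^*t^*}$ or just past $s^*$.

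The second step is to argue that this blocking vertex $v$ can be chosen on $\pi$. Let $\langle v_0, \ldots, v_i \rangle$ be the maximal prefix of $\pi$ shared with $\pi(s,s^*)$, and $\langle v_j, \ldots, v_k \rangle$ the maximal suffix shared with $\pi(t,t^*)$. The concatenation $\gamma = \pi(s,s^*) \cdot \overline{s^*t^*} \cdot \pi(t^*,t)$ is a path from $s$ to $t$ in the simply connected polygon $P$. If $i \geq j$, then $v_i$ already lies on both $\pi(s,s^*)$ and $\pi(t,t^*)$, and tautness of geodesics forces $v_i \in \overline{s^*t^*}$. Otherwise $i < j$; here I would invoke a sliding argument along the chord supporting $\overline{s^*t^*}$, showing that the cost restricted to that chord is minimized at a pair whose sightline passes through some $v_m$ with $i < m < j$---any blocking reflex vertex first encountered while sliding must lie on $\pi$, since otherwise one could reroute $\pi(s,s^*)$ or $\pi(t,t^*)$ around it to shortcut $\pi$ itself, contradicting the taut-string property of the shortest path.

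Tangency at the resulting vertex $v = v_m$ is then automatic: $v_m$ is reflex, and $\pi$ bends at $v_m$ into an interior wedge of angle strictly less than $\pi$ on one side; in order for $s^*$ and $t^*$ to see each other through $v_m$, their sightline must lie in the complementary reflex wedge, placing both $v_{m-1}$ and $v_{m+1}$ in a common closed half-plane bounded by the line containing $\overline{s^*t^*}$. The main obstacle I expect is precisely the sliding step in the middle: rigorously showing that when the blocking vertex of the current sightline does not a priori lie on $\pi$, a cost-non-increasing slide of $(s^*,t^*)$ along the supporting chord delivers one that does, while correctly handling the degenerate cases in which $\pi(s,s^*)$ or $\pi(t,t^*)$ is trivial or shares no prefix with $\pi$.
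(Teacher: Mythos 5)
There is a genuine gap, and it sits exactly at the crux of the lemma. Your first step (perturbing an optimal pair to force a blocking vertex onto the supporting chord) is fine in spirit and matches the paper's opening move, modulo a fixable issue in the min-max case: if $s^*$ is interior to $V(t^*)$ but $|\pi(s,s^*)|<|\pi(t,t^*)|$, shrinking $|\pi(s,s^*)|$ does not reduce the max, and the ``symmetric move'' on $t^*$ need not be available, so you need a tie-breaking device (the paper uses lexicographically smallest $(\max,\min)$ optima). The real problem is the second step. You reduce everything to the claim that the blocking vertex can be taken on $\pi(s,t)$, and you support it only by a ``sliding argument'' plus the assertion that a blocking vertex off $\pi(s,t)$ would let one ``reroute $\pi(s,s^*)$ or $\pi(t,t^*)$ around it to shortcut $\pi$ itself.'' That assertion does not hold up: shortcutting $\pi(s,s^*)$ or $\pi(t,t^*)$ says nothing about $\pi(s,t)$, and no contradiction with the taut-string property of $\pi(s,t)$ is actually derived. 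You yourself flag this as the main obstacle, so the proposal is a plan with the decisive step missing. The side branch $i\ge j$ is also unjustified as stated (``tautness forces $v_i\in\overline{s^*t^*}$''); what one can say there is that $(v_i,v_i)$ is feasible with no larger cost, which degenerates the segment and needs separate treatment rather than tangency ``for free.''

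The paper closes this gap with a different mechanism that your sketch never reaches. Let $\ell=\overline{xx'}$ be the maximal chord containing $\overline{s^*t^*}$ and let $v$ be a blocking vertex of $P$ on the segment (whose existence your step 1 essentially gives). Looking at the predecessors $s'$ of $s^*$ on $\pi(s,s^*)$ and $t'$ of $t^*$ on $\pi(t,t^*)$, one argues (in both the same-side and opposite-side cases) that $\pi(s,t)$ must cross the chord once in $\overline{xv}$, at some $y$, and once in $\overline{vx'}$, at some $y'$. Since $y$ and $y'$ lie on the same chord they are mutually visible, and a geodesic in a simple polygon between two mutually visible points is the straight segment, so $\pi(s,t)\supset\overline{yy'}\ni v$; tangency at $v$ (or at the two endpoints of an edge of $\pi(s,t)$ lying on the chord, in the opposite-side case) then follows. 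This crossing-and-straightening argument is what certifies that the blocking vertex is a vertex of $\pi(s,t)$; your proposal would need to supply it (or an equivalent) before the final ``tangency is automatic'' paragraph, which currently presumes the very facts in question.
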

\begin{proof}
  We first show that there is a vertex of $P$ lying on
  $\overline{s^*t^*}$.  Without loss of generality, assume that
  $\overline{s^*t^*}$ is
  horizontal with $s^*$ lying to the left of $t^*$.
  Let $\ell=\overline{xx'}$ be the maximal segment contained in $P$
  that contains $\overline{s^*t^*}$ with $x$ closer to $s^*$ than to
  $t^*$.  If $s=s^*$ (or $t=t^*$), then the lemma holds immediately
  because $s$ (or $t)$ is an endpoint of $\overline{s^*t^*}$.
  Assume to the contrary
  that $\overline{s^*t^*}$ contains no vertex of $P$.  Then there are
  points $p\in P$ in a neighborhood of $s^*$ and $q\in P$ in a
  neighborhood of $t^*$ such that $p$ and $q$ are visible to each other, and
  $\max\{|\pi(s,p)|, |\pi(t,q)|\}<\max\{|\pi(s,s^*)|,|\pi(t,t^*)|\}$
  and $|\pi(s,p)|+|\pi(t,q)|<|\pi(s,s^*)|+|\pi(t,t^*)|$.  This
  contradicts the optimality of $(s^*,t^*)$.

  We now show that $\overline{s^*t^*}$ contains a vertex of
  $\pi(s,t)$. 
  Let $s'$ be the vertex on $\pi(s,s^*)$ preceding $s^*$
  and let $t'$ be the vertex on $\pi(t,t^*)$ preceding $t^*$.
  Consider first the case that both $s'$ and $t'$ lie 
  below the line through $\ell$. See
  Figure~\ref{fig:ShortestPathVertex}(a).  Then \boundaryOfP touches
  $\overline{s^*t^*}$ at a vertex $v$ locally from below. Otherwise,
  $(s^*, t^*)$ is not optimal by the same argument as in the previous
  paragraph.  Then $s^*\in\overline{xv}$ and
  $t^*\in\overline{vx'}$. The path $\pi(s,t)$ 
  must cross $\overline{xv}$ at a point $y$ and $\overline{vx'}$ at a
  point $y'$. Since $y$ and $y'$ are visible to each
  other, and $\pi(s,t)$ is a shortest path, $\pi(s,t)$ contains
  $\overline{yy'}$, which in turn contains $v$. Thus $v$ lies on $\pi(s,t)$
  and $\overline{s^*t^*}$ is tangent to $\pi(s,t)$ at $v$.
	
  Consider now the case that $s'$ and $t'$ lie on different sides of
  the line through $\ell$. 
  Without loss of generality,
  assume that $s'$ lies below the line and $t'$ lies above the line.
  \ccheck{Then $\overline{s^*t^*}$
  	intersects $\pi(s,t)$. We first show that $\overline{s^*t^*}$
  	contains an edge of $\pi(s,t)$. Assume to the contrary that
  	$\overline{s^*t^*}$ intersects $\pi(s,t)$ only at a point, say $u$. 
  Then there is another line segment $\ell'\subset P$ 
  containing $u$ and intersecting both $\overline{s^*s'}$ and $\overline{t^*t'}$. 
  See Figure~\ref{fig:ShortestPathVertex}(b). 
  This contradicts that $(s^*,t^*)$ is an optimal solution because, for
  $s''=\ell'\cap \overline{s^*s'}$ and $t''= \ell'\cap\overline{t^*t'}$,
  $d(s,s'')<d(s,s^*)$ and/or $d(t,t'')<d(t,t^*)$ 
  and $s''$ and $t''$ are visible to each other. If $u$ coincides with $s^*$ 
  or $t^*$, only one of the distance inequalities above holds, we hence consider
  lexicographic smallest (max,min) solutions in the min-max setting to establish
  the contradiction. 
  Therefore, $\overline{s^*t^*}$
  contains an edge of $\pi(s,t)$, say $\overline{vv'}$. 
  Moreover, one of $v$ and $v'$ touches $\overline{s^*t^*}$ from above,
  and the other touches $\overline{s^*t^*}$ from below since 
  $s'$ and $t'$ are on different sides of $\ell$. }
  See Figure~\ref{fig:ShortestPathVertex}(c). 
  Thus, we can assume that $\boundaryOfP$ touches $\overline{s^*v}$
  at a vertex $v'$ locally from below. Then
  $\pi(s,t)$ must cross $\overline{xv'}$ at a point $y$, and $\overline{vx'}$ at a
  point $y'$. Since $y$ and $y'$ are visible to each
  other, and $\pi(s,t)$ is a shortest path, $\pi(s,t)$ contains
  $\overline{yy'}$, which in turn contains both $v'$ and $v$.  Thus both $v'$ and $v$ 
  lie on $\pi(s,t)$ and $\overline{s^*t^*}$ is tangent to $\pi(s,t)$ at both $v'$  and $v$.
\end{proof}
\begin{figure}[!t]
  \centering \includegraphics[width=0.8\textwidth]{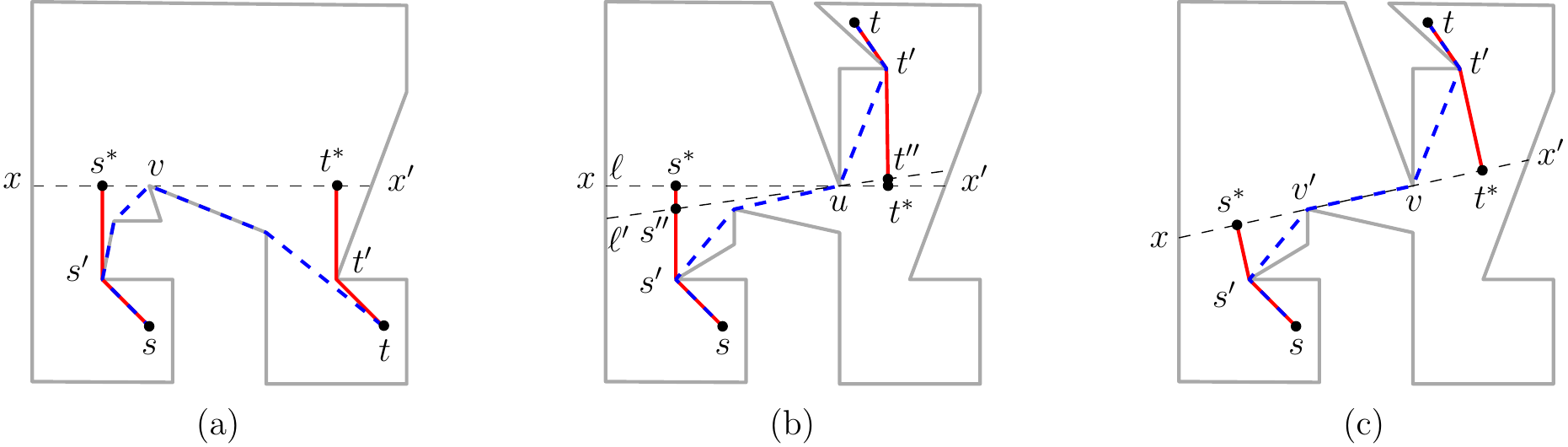}
  \caption{\small Illustrating cases for the proof of
    Lemma~\ref{lemma:contains-vertex}. The examples show the shortest
    (geodesic) path $\pi(s,t)$ (blue dashed) and the line $\ell$
    tangent to $v\in\pi(s,t)$ that, among all lines tangent to $v$
    minimizes shortest paths from $s,t$ to $\ell$ (red).  (a) Both $s'$
    and $t'$ lie on the same side of $\ell$ through $s^*$ and
    $t^*$. (b) $s'$ and $t'$ lie on different sides of $\ell$.
    (c) The shortest path $\pi(s,t)$ passes through $v$ and
    $v'$.}\label{fig:ShortestPathVertex}
\end{figure}
		
\section{Computing All Events for a Sweep-Line-Like
  Approach} \label{section:ComputingEvents} In the remaining part of
the paper, we use $(s^*, t^*)$ to denote the optimal solution pair
from $s$ and $t$ to a given line (and not necessarily a global optimal
solution for the quickest pair-visibility problem). For each vertex
$v$ on $\pi(s,t)$ we compute a finite collection of lines through $v$,
each being a configuration at which the combinatorial structure of the
shortest paths $\pi(s,s^*)$ and/or $\pi(t,t^*) $ changes.  To be more
precise, at these lines either the vertices of $\pi(s,s^*)$ or
$\pi(t,t^*)$ (except for $s^*$ and $t^*$) change or an edge of
\boundaryOfP changes that is intersected by the extension of
$\overline{s^*t^*}$. To explain how to compute these lines, we
introduce the concept of a \emph{line-of-sight}.
	
\begin{definition} [line-of-sight]
  We call a segment $\ell$ a \emph{line-of-sight} if (i)
    $\ell$ is a maximal segment contained in $P$, and
  (ii) $\ell$ is tangent to $\pi(s,t)$ at a vertex $v\in \pi(s,t)$.
\end{definition}
The algorithm we present is in many aspects similar to a sweep-line
strategy, except that we do not sweep over the scene in a standard
fashion but rotate a \emph{line-of-sight} $\ell$ in $P$ around the
vertices of the shortest path $\pi(s,t)$ in order from $s$ while making use
of Lemma~\ref{lemma:contains-vertex}. Recall that
  $ \langle v_{0},v_{1},\ldots, v_{k-1},v_{k}\rangle$ is the sequence of vertices on
  $\pi(s,t)$ with $s=v_0$ and $t=v_k$. The process will be
initialized with a line-of-sight that contains $s$ and $v_1$ and is
then rotated around $v_1$ (while remaining tangent to $v_1$) until it
hits $v_2$, see \fig~\ref{fig:BoundaryEvents}(a). In general, the
 line-of-sight is rotated around $v_{i}$ in a way so that it
remains tangent to $\pi(s,t)$ at $v_{i}$ (it is rotated in
the interior of $P$) until the line-of-sight contains $v_{i}$ and
$v_{i+1}$, then the process is iterated with $v_{i+1}$ as the new
rotation center. The process terminates as soon as the line-of-sight
contains $v_{k-1}$ and $v_k=t$.
	
While performing these rotations around the shortest path vertices, we
encounter all lines-of-sight.  As for a standard sweep-line approach,
we will compute and consider events at which the structure of a
solution changes: this is either because the interior vertices of
$\pi(s,s^*)$ or $\pi(t,t^*)$ change or because the line-of-sight
starts or ends at a different edge of \boundaryOfP.  These events will
be represented by points on \boundaryOfP{} \ccheck{ (actually, 
we consider the events as vertices on \boundaryOfP unless they are already vertices).}
Between two consecutive events, we compute the local minima of
the relevant distances for the variant at hand in constant time and
hence encounter all local minima eventually.
	
There are three event-types to distinguish:
\begin{enumerate*}
\item \textbf{Path-Events} are endpoints of lines-of-sight that
  contain two consecutive vertices of the shortest path $\pi(s,t)$.
  See \fig~\ref{fig:BoundaryEvents}(a).
\item \textbf{Boundary-Events} are endpoints of lines-of-sight that
  are tangent at a vertex of $\pi(s,t)$ and contain at least one
  vertex of $P\setminus \pi(s,t)$ (potentially as an endpoint). See
  \fig~\ref{fig:BoundaryEvents}(b).
\item \textbf{Bend-Events} are endpoints of lines-of-sight where the
  shortest path from $s$ (or $t$) to the line-of-sight gains or loses
  a vertex while rotating the line-of-sight around a vertex $v$. See
  \fig~\ref{fig:BoundaryEvents}(c). Note that bend-events can coincide
  with path- or boundary-events.
\end{enumerate*}
	
\begin{figure}[!t]
  \centering \includegraphics[width=0.8\textwidth]{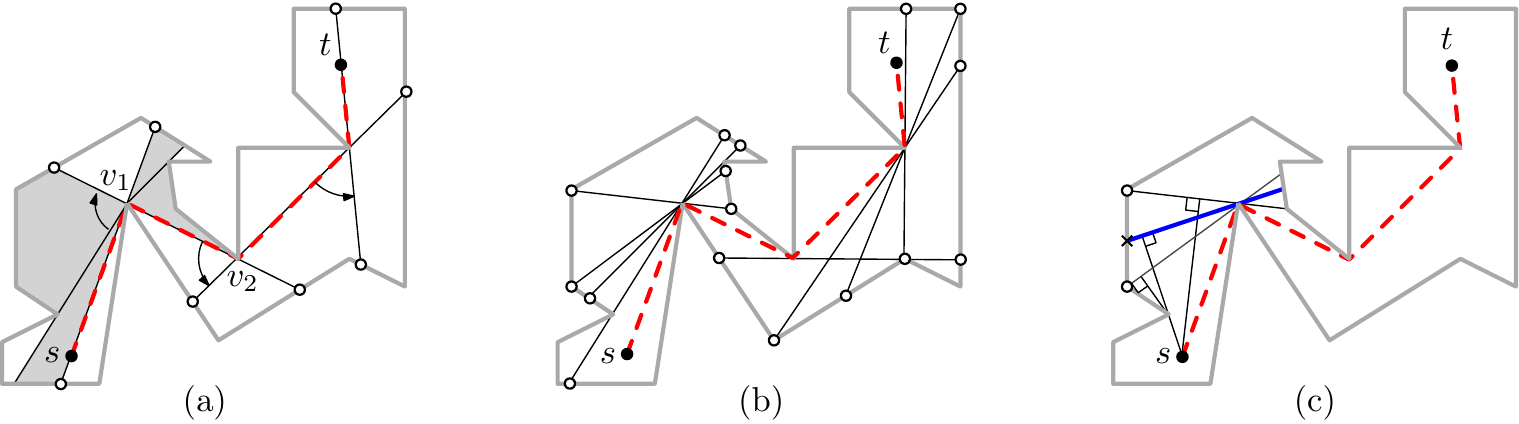}
  \caption{\small Path-, boundary-, and bend-events. (a) The endpoints
    of the line-of-sight through $\overline{sv_1}$ make up the first
    path-event. The line-of-sight rotates until it hits the next
    path-event: the endpoints of the line-of-sight through
    $\overline{v_1v_2}$. (b) Boundary-events that are not path-events. (c) A bend-event
    (marked with a cross) occurs between the two boundary-events. The
    shortest path from $s$ to the line-of-sight changes at the
    bend-event.}
  \label{fig:BoundaryEvents}
\end{figure}

We will need to explicitly know both endpoints of the line-of-sight on
\boundaryOfP at each event and the corresponding vertex of $\pi(s,t)$
on which we rotate.
\begin{lemma}[Computing path- and
  boundary-events]\label{lemma:InitQueue}
  For a simple polygon $P$ with $n$ vertices and points $s,t\in P$,
  the queue $\mathcal{Q}$ of all path- and boundary-events of the
  rotational sweep process, ordered according to the sequence in which
  the sweeping line-of-sight encounters them, can be initialized in
  $O(n)$ time.
\end{lemma}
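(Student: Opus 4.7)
The plan is to first compute the shortest path $\pi(s,t)$ together with a triangulation of $P$, and then, while processing the vertices $v_0,\dots,v_k$ of $\pi(s,t)$ in order from $s$, handle path-events and boundary-events separately. Using Chazelle's linear-time triangulation together with the standard funnel procedure on the resulting sleeve, both a triangulation of $P$ and $\pi(s,t)$ can be obtained in $O(n)$ time, so the preprocessing costs are already accounted for.

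Path-events are straightforward to enumerate: for each edge $\overline{v_iv_{i+1}}$ of $\pi(s,t)$, the two path-events are the endpoints of the maximal segment in $P$ that contains $\overline{v_iv_{i+1}}$. These endpoints are found by ray-shooting from $v_i$ and from $v_{i+1}$ along the two directions of the edge and walking through the triangulation until \boundaryOfP is hit. Because consecutive edges of $\pi(s,t)$ share a vertex and the two rays leaving that shared vertex point into disjoint wedges of the incident triangles, a standard charging argument shows that each triangle is traversed $O(1)$ times in aggregate, giving $O(n)$ time for all path-events.

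The boundary-events I would compute by explicitly simulating the rotational sweep around each $v_i$ in turn. Initialize the line-of-sight at $v_i$ along $\overline{v_{i-1}v_i}$ with its two endpoints $a$ and $b$ on \boundaryOfP (for $i=1$ the relevant endpoint is $s$). Rotate the line-of-sight around $v_i$ while maintaining $a$ and $b$: an infinitesimal rotation either keeps each endpoint on its current edge of \boundaryOfP{} or, when an endpoint reaches a vertex of $P$, records a boundary-event and slides the endpoint onto an adjacent boundary edge, which can be identified in $O(1)$ time using the triangulation. Rotation at $v_i$ ends when the line-of-sight contains $\overline{v_iv_{i+1}}$ (the next path-event), at which point we transition to $v_{i+1}$. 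Events are produced in sweep order by construction and can be appended directly to $\mathcal{Q}$.

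The main obstacle is bounding the total work over all rotation centers by $O(n)$. The key is an amortization argument: the shortest path $\pi(s,t)$ separates the regions of $P$ relevant to the sweep into two sides, and as the rotation center advances along $\pi(s,t)$ each endpoint of the rotating line-of-sight slides along \boundaryOfP{} in a monotone direction on its side (the monotonicity follows because the line-of-sight stays tangent to $\pi(s,t)$ and the rotation direction at each $v_i$ is consistent with the turn direction of $\pi(s,t)$ at $v_i$). Consequently each edge of \boundaryOfP{} is scanned by a line-of-sight endpoint $O(1)$ times over the whole sweep, each vertex of $P$ produces $O(1)$ boundary-events, and the simulation runs in $O(n)$ time. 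Combined with the linear-time computation of $\pi(s,t)$ and the triangulation, this yields the claimed $O(n)$ bound.
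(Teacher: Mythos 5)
There is a genuine gap, and it sits exactly where the lemma is hard. Your sweep simulation assumes that, between events, each endpoint of the line-of-sight stays on its current edge of \boundaryOfP and that when it reaches a vertex it ``slides onto an adjacent boundary edge'' identifiable in $O(1)$ time. That is false precisely at boundary-events: the line-of-sight is a \emph{maximal} chord, so when it sweeps past a reflex vertex its endpoint jumps discontinuously across the pocket hidden behind that vertex, and the boundary edge it lands on is in general nowhere near the previous one. Locating that landing edge is the real work of the lemma, and your two ways of paying for it are both unsubstantiated: a walk through a fixed triangulation is not bounded by $O(n)$ overall, because a single triangle can be crossed by many of the (pairwise disjoint, or $v_i$-concurrent) chords arising at different events -- the same objection applies to your ``each triangle is traversed $O(1)$ times'' charging for the path-event ray shots; and a walk along \boundaryOfP through the pocket needs the claim that skipped pockets are never revisited, i.e.\ your asserted global monotonicity of the endpoint motion along \boundaryOfP. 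That monotonicity is stated with only a parenthetical justification (tangency plus ``consistent rotation direction''), but the rotation sense flips with the turn direction of $\pi(s,t)$, and the disjointness/ordering of the swept regions is exactly the non-obvious structural fact (the paper's properties \textbf{P1}--\textbf{P4}) that has to be established before any such amortization goes through.

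For contrast, the paper avoids simulating the endpoint motion altogether: path-events are read off the shortest path map $M_s$ (the event chords appear as edges of $M_s$ incident to the $v_i$), boundary-events swept by $\ell^+$ are characterized via \textbf{P3} as exactly the vertices whose parent in the shortest path tree $T_s$ lies on $\pi(s,t)$, and the opposite endpoint of each event's line-of-sight is found by walking, around each $v_i$, the fan of cells of $M_t$ incident to $v_i$ in angular order; this is linear overall because each cell of $M_t$ is incident to only one rotation center and each event is charged once. If you want to keep your simulation, you would need to prove the disjointness/order properties of the swept regions and replace the triangulation walk by a structure adapted to the rotation centers (the shortest path maps do this), at which point you have essentially reconstructed the paper's proof.
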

\begin{proof}
  Consider some line-of-sight $\ell$ that is tangent to a vertex
  $v_{i}\in \pi(s,t)$ for some $0<i<k$.
  Then $\ell$ subdivides $P$ into a number of subpolygons.  Consider
  $\ell$ as the union of two (sub)segments $\ell^+$ and $\ell^-$ of
  $\ell$ induced by $v_i$
  such that $\ell^+\cap \ell^- = \{v_i\}$ and $\ell^-$ is incident to
  the subpolygon of $P$ induced by $\ell$ containing $s$.
	
  We will discuss the computation of all boundary- and path-events
  swept by $\ell^+$.  The other events swept by $\ell^-$ can be
  computed in a second round by changing the roles of $s$ and $t$.  We
  do not maintain a queue for the events explicitly; instead we will
  introduce new vertices on \boundaryOfP or label existing vertices of
  \boundaryOfP as events. Later the events will be considered by
  following two pointers to vertices on \boundaryOfP and hence by
  processing the vertices in the order of their occurrence
  on \boundaryOfP.
	
  We start with computing all path-events swept by $\ell^+$. For this
  we compute the \emph{shortest path map} $M_s$ of $s$ in $P$. The shortest
  path map of $s$ is a decomposition of $P$ in $O(n)$ triangular cells
  such that the shortest path from $s$ to any point within a cell is
  combinatorially the same. It can be obtained by extending every edge
  of the shortest path tree of $s$ towards its descendants until it
  reaches \boundaryOfP in linear time~\cite{Guibasetal1987}.  A
  path-event occurs when a line-of-sight contains two consecutive
  vertices of $\pi(s,t)$.
   Note that for each path-event, $\ell^+$ appears 
  as an edge of $M_s$ and its endpoints appear as vertices of $M_s$
  (see also Figure~\ref{fig:SPM}(a)).  For each index
    $i$ with $0< i< k$, we find the edge incident to $v_i$ and
  parallel to $\overline{v_{i-1}v_{i}}$ by considering every
  edge of $M_s$ incident to $v_i$. This takes $O(n)$ time in total
  since there are $O(n)$ edges of $M_s$ and we consider every edge at
  most once. Note that the path-event induced by $v_{k-1}$ and $t$ is
  an exception, but it can also be computed in $O(1)$ time during the process
  \ccheck{by considering the triangle of $M_s$ that contains $t$.}
  \begin{figure}
    \centering
    \includegraphics[width=0.5\textwidth]{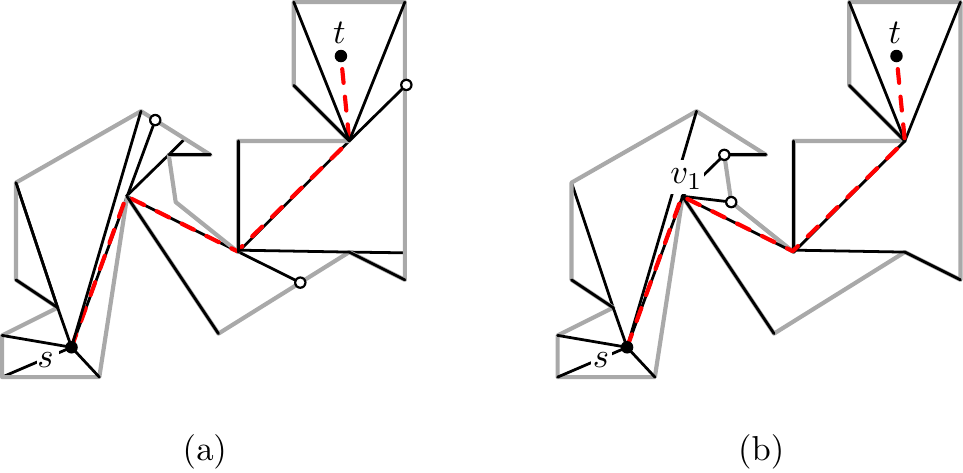}
    \caption{\small (a) The shortest path map $M_s$. 
        All path-events, swept by $\ell^+$, appear as endpoints of edges
        of $M_s$, except the one induced by $v_{k-1}$ and $t$; these events are marked with small disks. (b) The
        shortest path tree $T_s$. The boundary-events, swept by
        $\ell^+$, tangent to $\pi(s,t)$ at $v_1$ are marked. Clearly,
        the parent vertex of these vertices in $T_s$ is $v_1$.}
    \label{fig:SPM}
  \end{figure}
	
  For computing the boundary-events, we use the following properties.
  While rotating around $v_{i}$ from the position where $\ell$
  contains $v_{i-1}$ to the position in which $\ell$ contains
  $v_{i+1}$, let $A_i^+$ ($A_i^-$) be the region of $P$ that is swept
  over by $\ell^+$ ($\ell^-$). (See \fig~\ref{fig:queue_init}.)
  Observe that
  \begin{description*}
  \item[P1\label{prop:Aplus:disj}] all $A_i^+$ for $0<i<k$
    are pairwise disjoint in their interior,
  \item[P2\label{prop:Aminus:disj}] all $A_i^-$ for $0<i<k$
    are pairwise disjoint in their interior,
  \item[P3\label{prop:Aplus:path}] for all $0<i<k$ and all
    points $p\in A_i^+$ the shortest path $\pi(s,p)$
    contains $v_{i}$ (i.e., $v_i$ is the predecessor of $p$
      on $\pi(s,p)$),
  \item[P4\label{prop:Aminus:path}] for all $0<i<k$ and all
    points $p\in A_i^-$ the shortest path $\pi(p,t)$ contains $v_{i}$
    (i.e., $v_i$ is the successor of $p$ on $\pi(p,t)$).
  \end{description*}

  To compute all boundary-events that are vertices of $P$ swept by
  $\ell^+$, we will make use of the shortest path tree $T_s$ for $s$
  in $P$.  A boundary-event $x$ is defined by a vertex
  $v_{i}\in\pi(s,t)$ such that the line-of-sight that contains $x$
  (potentially as one endpoint) is tangent to $\pi(s,t)$ in $v_{i}$.
  It follows from \propthree, that $\overline{v_{i}x}$ is an edge of
  $T_s$ (and by that it cannot be obstructed
  by
  edges of $P$) and $x\notin \pi(s,t)$. So the vertices of $P$ whose
  parent vertex in $T_s$ is a vertex of $\pi(s,t)$ are possible
  boundary-events. In order to compute all boundary-events we consider
  all consecutive path-events and compute all corresponding
  boundary-events by following \boundaryOfP and checking the vertices
  within the candidate set (see
    Figure~\ref{fig:SPM}(b)).
  We compute the boundary-events that are vertices of $P$ swept by
  $\ell^-$ in a similar way.
	
  So far we have labeled all vertices $x$ on \boundaryOfP that are
  boundary-events. We still need to compute the other endpoint
  $\tilde{x}$ of the line-of-sight $\overline{x\tilde{x}}$ that is
  tangent in $v_i$.  Let $\overline{x_i\tilde{x}_i}$ be the
  line-of-sight at the path-event $x_i$ so that
  $\tilde{x}_i, v_{i-1}, v_i, x_i \in \ell$. See
  \fig~\ref{fig:queue_init}. While rotating $\ell$ around $v_i$,
  $\ell^+$ sweeps over $A_i^+$ until the next path-event is met.  Let
  $E_i^+$ be the sequence of the path- and boundary-events in $A^+_i$
  we obtained so far sorted in counter-clockwise order along
  \boundaryOfP.  The order of events in $E^+_i$ is the same as the
  order in which $\ell^+$ sweeps over them.
  Our goal is to compute $\tilde{x}$ for every event in $E^+_i$ in
  order.  To do this, we consider the (triangular) cells of the shortest path
  map $M_t$ of $t$
  incident to $v_i$ one by one in counter-clockwise order around $v_i$
  starting from the cell incident to $\tilde{x}_i$. Since
  every point in such cells is visible from $v_i$, we can determine if
  $\tilde{x}$ is contained in a cell in constant time for any event
  $x\in E^+_i$.  Therefore, we can compute $\tilde{x}$ for every event
  $x$ in $E^+_i$ in time linear in the number of the cells of $M_t$
  incident to $v_i$ and the number of events of $E^+_i$, giving us all
  path- and boundary-events in $O(n)$ total time.
  \begin{figure}[!tb]
    \begin{center}
      \includegraphics[scale=0.8]{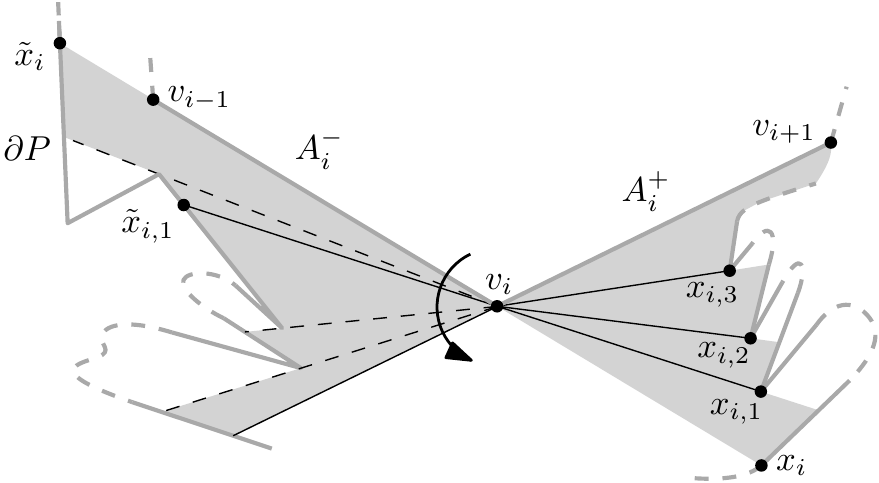}
      \caption{\small Let
        $E^+_i = \langle x_{i,1},\ldots,x_{i,j}\rangle$ for an index
        $1\leq j\leq n$.  We start at $\tilde{x}_i$ and follow the
        (triangular) cells of $M_t$ incident to $v_i$ in
        counter-clockwise order around $v_i$ until we find
        $\tilde{x}_{i,1}$. Then we continue to follow such cells until
        we find $\tilde{x}_{i,2}$, and so on.}
      \label{fig:queue_init}
    \end{center}
  \end{figure}
\end{proof}
        
Once we initialized the event queue $\mathcal{Q}$, we can now compute
and process bend-events as we proceed in our line-of-sight rotations.

\begin{lemma}\label{lemma:BendEvents}
  All bend-events can be computed in $O(n)$ time, sorted in
    the order as they appear on the boundary of $P$.
\end{lemma}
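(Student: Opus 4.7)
My plan is to reduce every bend-event to a reflex vertex of $P$ lying in one of the regions $A_i^\pm$, and then exploit Properties \propone{}--\proptwo{} (pairwise interior-disjointness of these regions) together with the shortest-path trees $T_s$ and $T_t$ to charge each vertex of $P$ to $O(1)$ work.

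First I would use \propthree{} and the analogous Property~\textbf{P4} to decouple the rotation. While $\ell$ rotates around $v_i$, every shortest path from $s$ to a point of $\ell^+\subseteq A_i^+$ is forced through $v_i$, so the combinatorial structure of $\pi(s,s^*)$ can only change when $s^*$ moves inside $A_i^-$; symmetrically, $\pi(t,t^*)$ only changes when $t^*$ moves inside $A_i^+$. Thus bend-events split cleanly into ``$s$-bends'' (endpoints on the arc of \boundaryOfP{} bounding $A_i^-$) and ``$t$-bends'' (on the arc bounding $A_i^+$), and the two are treated symmetrically.

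Next I would show that each $s$-bend for rotation center $v_i$ is witnessed by a reflex vertex $u$ in the closure of $A_i^-$: the funnel from $s$ to $\ell^-$ gains or loses a turn vertex exactly when the rotating line-of-sight becomes collinear with a funnel edge incident to some reflex vertex $u$ of $P$, and at that instant the line-of-sight passes through both $v_i$ and $u$, forcing $u$ into the closure of $A_i^-$. By \proptwo{} each reflex vertex is charged to at most one $v_i$ as an $s$-bend, and by \propone{} the same holds for $t$-bends, so in total there are $O(n)$ bend-events. To enumerate and sort them, I would first label each vertex of $P$ with the index $i$ of the (unique) region $A_i^-$ or $A_i^+$ containing it, as a by-product of the construction in the proof of Lemma~\ref{lemma:InitQueue}. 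For each rotation center $v_i$ the candidate $s$-bends are the labeled reflex vertices $u\in A_i^-$, and the bend-event on \boundaryOfP{} is the intersection of \boundaryOfP{} with the ray from $u$ through $v_i$; since $A_i^-$ is star-shaped with respect to $v_i$, these endpoints appear along the arc bounding $A_i^-$ in the same cyclic order as the labeled vertices $u$ along \boundaryOfP. Concatenating the per-$v_i$ lists (and the analogous $t$-bend lists for $A_i^+$) along \boundaryOfP{} yields the bend-events in \boundaryOfP-order in $O(n)$ total time, and a two-pointer walk along \boundaryOfP{} merges them with the queue $\mathcal{Q}$ from Lemma~\ref{lemma:InitQueue} on the fly.

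The main obstacle will be the charging step: I have to rule out ``leakage'' in which a bend during the rotation around $v_i$ is caused by a reflex vertex lying outside $A_i^-\cup A_i^+$. The geometric argument is that at the moment of the bend the supporting line passes through $v_i$ at some angle strictly between those of $\overline{v_{i-1}v_i}$ and $\overline{v_iv_{i+1}}$, and the witnessing vertex must lie on the segment of that line contained in $P$; by definition this segment is exactly the line-of-sight through $v_i$ at that angle, whose two sub-segments sweep only $A_i^-$ and $A_i^+$. Once this containment is pinned down rigorously the $O(n)$ bound is immediate from \propone{}--\proptwo.
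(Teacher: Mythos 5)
Your proposal hinges on the claim that at the moment of a bend-event the line-of-sight passes through both the rotation center $v_i$ and the witnessing reflex vertex $u$, so that $u$ lies in (the closure of) $A_i^-$ or $A_i^+$ and the event point on \boundaryOfP can be recovered by shooting the ray from $u$ through $v_i$. That claim is false for precisely the bend-events that are not already boundary-events. There are two kinds of genuine bends: (a) the last edge $e_\ell=\overline{uw}$ of $\pi(s,\ell)$, which at such a moment is \emph{orthogonal} to $\ell$, hits a vertex $u'$ of $\pi(s,t)$ in its interior (the funnel gains $u'$), and (b) the two edges of $\pi(s,\ell)$ incident to the anchor $u$ become parallel (the funnel loses $u$). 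In both cases the witnessing vertex lies strictly off the line-of-sight; the correct event position is the intersection of \boundaryOfP with the line through $v_i$ orthogonal to $\overline{uu'}$ (resp.\ to the edge of the funnel preceding $u$), not with the ray from $u$ through $v_i$. The only bends at which the witnessing vertex does lie on $\ell$ are those where the endpoint $x$ of $\ell^-$ sweeps over a vertex, and these coincide with boundary-events already handled by Lemma~\ref{lemma:InitQueue}. So both your enumeration of event positions and your ``no leakage'' argument, which rests on exactly this collinearity premise, collapse.

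This also breaks the counting. A vertex can enter the funnel $\pi(s,\ell)$ while the line-of-sight rotates around some $v_j$, persist across several path-events, and only cause a bend (its disappearance) during the rotation around a later $v_i$; such an event is not witnessed inside $A_i^-$, so \propone{}/\proptwo{} do not by themselves give an $O(n)$ bound, nor do they give the events in \boundaryOfP-order per region. What the paper actually does is maintain $\pi(s,\ell)$ dynamically over the whole sweep, predict the next bend in $O(1)$ time from the current funnel (endpoint of the current boundary edge, the line through $v$ orthogonal to $\overline{uu'}$, and the line through $v$ orthogonal to the funnel edge at $u$), and bound the total work by the amount of combinatorial change of $\pi(s,\ell)$, which is $O(n)$ because a vertex that leaves $\pi(s,\ell)$ never reappears (Lemma~\ref{lemma:vertices-shortestpath}). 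Your proposal contains no substitute for that no-reappearance argument, which is the crux of the linear-time bound.
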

\begin{proof} Bend-events occur between consecutive path- and
    boundary-events; they can also coincide with these events.
  We assume that all path- and boundary-events are already
  computed. Additionally, we assume that all vertices of the boundary-
  and path-events (the endpoints of the corresponding lines-of-sight)
  are inserted on \boundaryOfP. Recall that, for each event, we know
  both endpoints of the line-of-sight $\ell$ on \boundaryOfP and the
  corresponding vertex of $\pi(s,t)$ on which we rotate. The
    path- and boundary-events define the area which is swept over by
    $\ell$. Thus, we know which positions for $\ell$ we have to
    consider in order to compute all bend-events.

  As in the proof of Lemma~\ref{lemma:InitQueue}, we consider the
  line-of-sight $\ell$ tangent to a vertex $v\in \pi(s,t)$ as
  the union of two (sub)segments $\ell^+$ and $\ell^-$ of
    $\ell$ induced by $v$ such that $\ell^+\cap \ell^- = \{v\}$ and
    $\ell^-$ is incident to the subpolygon of $P$ induced by $\ell$
    containing $s$. We discuss the computation of all bend-events
  that are encountered by $\ell^-$.  The bend-events that are swept
  over by $\ell^+$ can be computed in a second round by changing the
  roles of $s$ and $t$.
 
  We start with the path-event defined by $s$ and $v_1$, and consider
  all events in the order they appear.  Let $\ell$ be the 
  line-of-sight rotating around a vertex $v$ and denote by $x$ the
  endpoint of $\ell^-$ other than $v$.  To find the bend-events
  efficiently, we compute and maintain the shortest path $\pi(s,\ell)=\pi(s,\ell^-)$
  over the events.
 
  \newcommand{\typeone}{type \textbf{T1}} \newcommand{\typetwo}{type
    \textbf{T2}}
 
  While $\ell$ rotates around $v$, the combinatorial structure of
  $\pi(s,\ell)$ may change. Specifically, let $e_\ell=\overline{uw}$
    denote the edge of $\pi(s,\ell)$ incident to $\ell$ with $w$ on
    $\ell$.
  Note that during the rotation of $\ell$, all the edges of
  $\pi(s,\ell)$ are stationary, except that $e_\ell$ rotates around
  $u$. Therefore, a change in the combinatorial structure of
  $\pi(s,\ell)$ occurs only when $e_\ell$ hits a vertex $u'$ of $P$
  (if $u'$ at this event is an endpoint of $e_\ell$, then this bend-event
  coincides with a previously computed boundary-event)
  and splits into two edges sharing $u'$ (an event of \typeone{}) or
  the two edges of $\pi(s,\ell)$ incident to $u$ become parallel (an
  event of \typetwo{}). (Then they merge into one and $u$ disappears
  from $\pi(s,\ell)$.)  See
  Figure~\ref{fig:BendEvents}. From any event of the two event types
  above, $e_\ell, u$, and $\pi(s,\ell)$ are updated accordingly.
  Additionally, $x$ is updated and its new position is inserted as a
  vertex on \boundaryOfP as it represents a bend-event.
  \begin{lemma}
    An event of \typeone{} occurs only when (i) $x$ reaches a
      vertex $u'$, or (ii) $e_\ell$ hits a vertex $u'$ of $\pi(s,t)$
    in its interior.  Moreover, for case (ii), $u$ and $u'$ are
    consecutive in $\pi(s,t)$.
  \end{lemma}
  \begin{proof}
    Imagine $\ell$ is rotated around $v$ infinitesimally further 
    from the current event. Then either $e_\ell$ is orthogonal to $\ell$
    or not.
    If $e_\ell$ is not orthogonal to $\ell$,
    the closest point in $\ell$ from $s$ is
    $x$. 
    Thus, the only way that $e_\ell$ hits a vertex of $P$ is that $x$
    reaches $u'$.  See Figure~\ref{fig:BendEvents}(a).
   
    Now consider the case that $e_\ell$ is orthogonal to
    $\ell$. Notice that the shortest path from a vertex $v$
      to a segment within a simple polygon lies inside a
      \emph{funnel}, a region bounded by the shortest paths
      from $v$ to both endpoints of the segment and the segment. For more details
      see~\cite{Guibasetal1987}.  Thus, $u'$ is contained in
    $\pi(u,v)$.  See Figure~\ref{fig:BendEvents}(b).  Since $\pi(u,v)$
    is a subpath of $\pi(s,t)$, $u'$ is a vertex of
    $\pi(s,t)$, and thus $u$ is the vertex of $\pi(s,t)$ previous to
    $u'$ from $s$.
  \end{proof}

 \begin{figure}[t]
   \centering \includegraphics[width=0.9\textwidth]{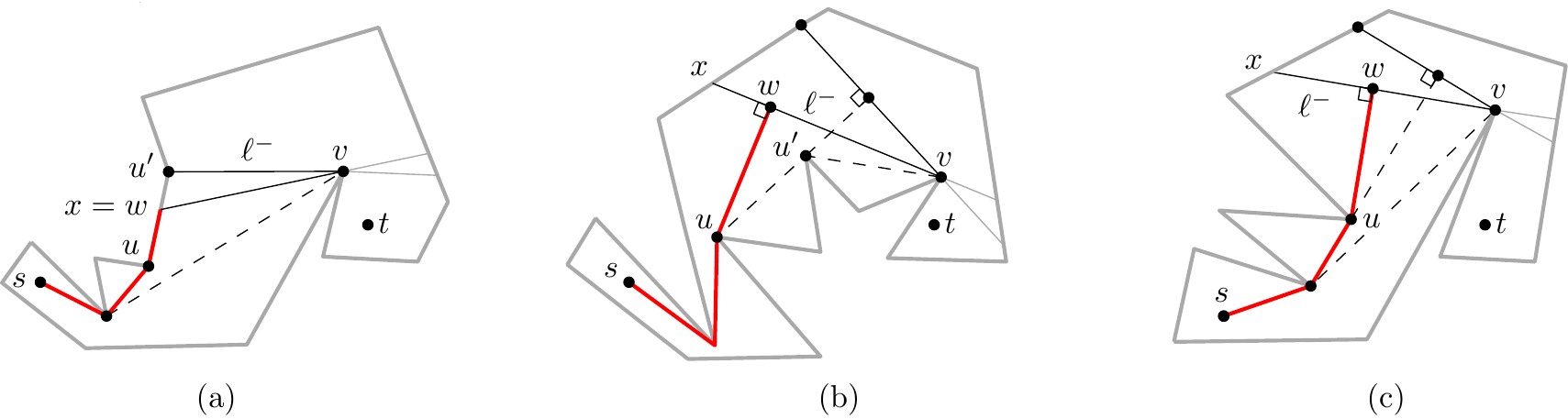}
   \caption{\small (a) A bend-event of \typeone{} occurs when
     $x=u_\ell$ reaches $u'$. (b) A bend-event of \typeone{}
     occurs when $e_\ell=\overline{uw}$ hits a
     vertex $u'$ of $\pi(s,t)$. (c) A bend-event of
     \typetwo{} occurs when two edges incident to $u$
     are parallel.}
   \label{fig:BendEvents}
 \end{figure}
 \begin{lemma}\label{lemma:vertices-shortestpath}
   Once a vertex disappears from $\pi(s,\ell)$, it never appears again
   on the shortest path during the rotation of the line-of-sight $\ell$.
 \end{lemma}	
 \begin{proof}
   Assume to the contrary that there is a vertex $u$ that disappears
   from $\pi(s,\ell_1)$, but then appears again on $\pi(s,\ell_2)$ for
   two distinct lines-of-sight $\ell_1$ and $\ell_2$ during the rotation.
   \ccheck{First note that if $u$ is an endpoint of $\pi(s,\ell_1)$ (or $\pi(s,\ell_2)$), it is 
   a boundary- and bend-event, and would only appear once
   when rotating the line-of-sight.
   Therefore, both $\pi(s,\ell_1)$ and $\pi(s,\ell_2)$ must contain $u$ in their
   interiors, and both of them also contain
   $\pi(s,u)$ in their interiors.} Since $u$ disappears from $\pi(s,\ell_1)$, 
   the edge of
   $\pi(s,\ell_1)$ incident to $u$ (on $\pi(u,\ell_1)$) is orthogonal to
   $\ell_1$. We claim that $u$ appears on $\pi(s,\ell_2)$ due
     to case (ii) of \typeone{}, that is, the edge of $\pi(s,\ell_2)$
   incident to $\ell_2$ hits $u$. Assume to the contrary that $u$
   appears on $\pi(s,\ell_2)$ due to case (i) of \typeone{}.
   However, $u$ (and its event vertex on \boundaryOfP) is
   already swept by a line-of-sight before we consider $\ell_2$
   because it appears on $\pi(s,\ell_1)$. \ccheck{By \proptwo, $\ell^-$ sweeps a vertex only once.} Thus, $u$ appears on
   $\pi(s,\ell_2)$ due to case (ii) of \typeone{},
   and the edge of $\pi(s,\ell_2)$ incident to $u$ is orthogonal to
   $\ell_2$. This means that $\ell_1$ and $\ell_2$ are parallel.
		
   Since $\ell_1$ and $\ell_2$ are parallel, they are tangent to
   $\pi(s,t)$ at two distinct vertices, say $u_1$ and $u_2$,
   respectively. \ccheck{Without loss of generality, assume that $u_1$ is closer to
   	$s$ than $u_2$.} We show that $\pi(p_1,p_2)$ contains $u_1$ for
   any two points $p_1 \in P_1$ and $p_2\in \ell_2$, where
   $P_1$ is the subpolygon bounded by $\ell^-_1$ containing
     $s$. 
   \ccheck{Since both $u_1$ and $u_2$ are vertices of $\pi(s,t)$,
   $\pi(s,u_2)$ contains $u_1$. Let $p$ be the point on $\ell^-_2$ farthest from $u_2$
	such that $\pi(s,p)$ contains $u_1$. Since the boundary of $P$ intersect neither 
	$\overline{u_1p}$ nor $\overline{u_2p}$, $\pi(u_1,u_2)$ is contained in the
	triangle with corners $u_1, u_2, p$. No line segment parallel to $\ell_2$ is tangent to $\pi(s,t)$ at $u_1$, which is a contradiction. Therefore, $\pi(s,p_2)$ contains $u_1$ for any point $p_2\in\ell_2$. Then since $\ell_1$ is tangent to $\pi(s,t)$,
	$\pi(p_1,p_2)$ contains $u_1$ for any two points $p_1 \in P_1$ and $p_2\in \ell_2$.}     
   Thus, $\pi(s,\ell_2)$ contains $\pi(s,u_1)$, and no vertex
   in $P_1$ other than the vertices of $\pi(s,u_1)$ appears on
   $\pi(s,\ell_2)$.  Since $u$ is contained in $P_1$, it cannot appear
   on $\pi(s,\ell_2)$, which is a contradiction.
 \end{proof}

 Using the two lemmas, we can compute all bend-events as follows.
 For a line-of-sight $\ell$ rotating around a vertex $v$, 
 we have three candidates for the
 next bend-event. Let $e$ be the edge of $P$ containing the endpoint of $\ell^-$
 other than $v$, and let $u'$ be the neighboring vertex of $u$ in $\pi(u,t)$. The next bend-event is (1) the endpoint of $e$ not contained in $\pi(s,\ell)$ if it exists, (2)
 the intersection point between $e$ and the line through $v$ and orthogonal
 to $uu'$ if it exists,  or
 (3) the intersection point between $e$ and the line
 through $v$ and orthogonal to $u''$ if it exists, where $u''$ is the neighboring vertex 
 of $u$ in $\pi(s,\ell)$ closer to $s$. Note that the first two cases are  \typeone{} events and the last case is a \typetwo{} event. 
 We can compute all of the three events in constant time. Also, we can update $u, e_\ell,x$ and $\pi(s,\ell)$ 
 accordingly in constant time. 
 Therefore, the time for computing all bend-events is linear in the 
 amount of the combinatorial change on $\pi(s,\ell)$. 
 By Lemma~\ref{lemma:vertices-shortestpath}, the the amount of the combinatorial change is $O(n)$, and therefore,
 we can compute all bend-events in $O(n)$ time. 
\end{proof}

\section{Algorithm Based on a Sweep-Line-Like Approach}\label{section:Algorithm}

In this section, we present a linear-time algorithm for computing the
minimum distance that two points $s$ and $t$ in a simple polygon $P$
travel in order to see each order.  We compute all events defined in
Section~\ref{section:ComputingEvents} in linear time.  The remaining
task is to handle the lines-of-sight lying between two consecutive
events.

\begin{lemma}\label{lemma:greenstarformula}
  For any two consecutive events, the line-of-sight $\ell$ lying
  between them that minimizes the sum or the maximum of the distances 
  from $s$ and $t$ to $\ell$ can be found in constant time.
\end{lemma}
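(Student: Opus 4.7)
The plan is to parametrize the line-of-sight $\ell(\theta)$ between the two consecutive events by the rotation angle $\theta$ around its pivot $v\in\pi(s,t)$, and to express both $|\pi(s,\ell(\theta))|$ and $|\pi(t,\ell(\theta))|$ as closed-form trigonometric functions of $\theta$. The key observation is that, by the very definition of the event set constructed in Section~\ref{section:ComputingEvents}, the combinatorial structure of $\pi(s,\ell^-)$ and $\pi(t,\ell^+)$ (and with it the pivot $v$ as well as the predecessors $u_s,u_t$ of the endpoints of the paths on $\ell$) is fixed between the two events. Writing $L_s:=|\pi(s,u_s)|$ and $L_t:=|\pi(t,u_t)|$, which are constants on the interval, this gives
\[
|\pi(s,\ell(\theta))|=L_s+d(u_s,\ell^-(\theta)), \qquad |\pi(t,\ell(\theta))|=L_t+d(u_t,\ell^+(\theta)).
\]

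Next I would argue that between events the foot of the perpendicular from $u_s$ lies in the relative interior of $\ell^-(\theta)$ throughout the interval (otherwise a bend-, boundary-, or path-event would occur strictly inside it), so $d(u_s,\ell^-(\theta))=|u_sv|\sin(\theta-\alpha_s)$ for a fixed angle $\alpha_s$, with the sign chosen to keep the expression non-negative; the analogous identity holds for $u_t$. In the degenerate situation $u_s=v$ (respectively $u_t=v$) the corresponding distance collapses to $0$ and the analysis below simplifies. As a consequence, $f_s(\theta):=|\pi(s,\ell(\theta))|$ and $f_t(\theta):=|\pi(t,\ell(\theta))|$ are both concave sinusoids on the event interval, and each is computable in $O(1)$ time.

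For the min-sum variant, $f_s+f_t$ is itself a sinusoid plus a constant, because the sum of two sines of equal frequency can be rewritten via the standard identity as $(L_s+L_t)+R\sin(\theta+\phi)$ with $R,\phi$ derived from $|u_sv|,|u_tv|,\alpha_s,\alpha_t$ in $O(1)$. Its interior critical points on the event interval are the roots of the single equation $\cos(\theta+\phi)=0$, and I would compare the value at any such critical point lying in the interval with the values at the two interval endpoints. For the min-max variant, the concavity of $f_s$ and $f_t$ implies that $\max\{f_s,f_t\}$ attains its minimum either at an interval endpoint or at a crossing point $f_s(\theta)=f_t(\theta)$, and the crossing equation again reduces to a form $A'\sin\theta+B'\cos\theta=C'$ solvable in $O(1)$ time. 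In either variant only $O(1)$ candidate angles arise, and we return the one yielding the smallest objective.

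The hard part is not the optimization itself but the bookkeeping: one must fix the sign conventions for $\sin(\theta-\alpha_s)$ and $\sin(\theta-\alpha_t)$ consistently on the interval, which relies on the structural guarantee that no event lies strictly between the two under consideration, and one has to dispatch cleanly the degenerate subcases $u_s=v$ or $u_t=v$, as well as the min-max situation in which $f_s$ and $f_t$ do not cross inside the interval (so that only the endpoint candidates are relevant). Once these case distinctions are settled, every candidate minimum is the root of a constant-size trigonometric equation and can therefore be evaluated in constant time, as claimed.
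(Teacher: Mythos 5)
Your parametrization by the rotation angle and the reduction to a constant-size closed-form optimization is essentially the paper's approach (the paper parametrizes by slope and splits into positive/negative slopes instead of angles, which is only a cosmetic difference), and your observation that $\pi(s,u_s)$ and $\pi(t,u_t)$ are combinatorially fixed between events matches the paper. However, there is a genuine gap in the step where you claim that between two consecutive events the foot of the perpendicular from $u_s$ always lies in the relative interior of $\ell^-(\theta)$, ``otherwise a bend-, boundary-, or path-event would occur strictly inside it.'' That justification is false: none of the defined event types captures the moment at which the closest point of $\ell^-$ to $u_s$ switches from the perpendicular foot to the endpoint of $\ell^-$ on $\partial P$. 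At that transition no vertex of $\pi(s,\ell)$ appears or disappears (so it is not a bend-event), and $\ell$ need not pass through any polygon vertex (so it is neither a boundary- nor a path-event). Indeed, the paper's own analysis of bend-events explicitly works with the regime in which $e_\ell$ is not orthogonal to $\ell$ and the closest point of $\ell$ from $s$ is the segment endpoint $x$; this regime can occupy part (or all) of an interval between consecutive events.

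Consequently your formula $d(u_s,\ell^-(\theta))=|u_sv|\sin(\theta-\alpha_s)$, and with it the ``concave sinusoid'' structure on which both your min-sum and min-max arguments rest, is only valid in the perpendicular-arrival case. In the other case the distance is $|\pi(s,u_s)|$ plus the distance from $u_s$ to the endpoint of $\ell(\theta)$, which slides along a single fixed edge of $\partial P$ between events; this is a different algebraic function of $\theta$ (a ratio of trigonometric terms, not a pure sinusoid), so the sum is no longer ``sinusoid plus constant'' and concavity is no longer automatic. The paper handles exactly this by distinguishing the two cases (distance to the supporting line versus distance to the closest endpoint of $\ell$) and noting that each yields a constant-size expression minimizable by elementary analysis. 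Your proof can be repaired in the same way: split the interval at the (at most constantly many) regime transitions, write down the second functional form explicitly, and optimize each piece separately; without that case your argument does not cover all lines-of-sight between two consecutive events.
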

\begin{proof}
  Let $\mathcal{L}$ be the set of all lines-of-sight lying between the
  two consecutive events. 
  We assume that $\mathcal{L}$ contains no vertical line-of-sight. 
  Otherwise, we consider the subset containing all lines-of-sight 
  with positive slopes, and then the subset containing
  all lines-of-sight with negative slopes.
  
  These lines-of-sight share a common vertex $v$ of $\pi(s,t)$.  
  We will give an algebraic function for $|\pi(s,\ell)|$ for $\ell\in\mathcal{L}$.
  An algebraic function for $|\pi(t,\ell)|$ can be obtained analogously.
  Observe that $\pi(s,u)$ is the same for all $\ell\in\mathcal{L}$, where
  $u$ is the second to the last vertex $u$ of $\pi(s,\ell)$ from $s$.
  Thus, we consider only the length of $\pi(u,\ell)$, which is 
  a line segment.
  The length is either the Euclidean distance between
  $u$ and the line containing $\ell$, or the Euclidean distance
  between $u$ and the endpoint of $\ell$ closest to $u$.
  We show how to handle the first case only because the second case can be handled
  analogously.
  
  Let $\ell(\alpha)$ denote the line of
  slope $\alpha$ passing through $v$ for $\alpha>0$, which is represented as 
  $y= \alpha x+ f(\alpha)$, where $f(\alpha)$ is a function linear in
  $\alpha$.  
  Then the distance between $u$ and $\ell(\alpha)$ can be
  represented as 
  ${|c_1\alpha + c_2|}/{\sqrt{\alpha^2 + 1}}$, where $c_1$ and $c_2$
  are constants depending only on $v$ and $u$.
  Thus, our problem reduces to finding a minimum of the function 
  of the form
  ${(|c_1\alpha + c_2|+|c_1'\alpha + c_2'|)}/{\sqrt{\alpha^2 + 1}}$ and $\max(|c_1\alpha + c_2|,|c_1'\alpha + c_2'|)/{\sqrt{\alpha^2 + 1}}$, respectively,
  for four constants $c_1, c_2, c_1'$ and $c_2'$, and for all $\alpha$
  such that $\ell(\alpha)$ contains a line-of-sight in
  $\mathcal{L}$.
  We can find a
  minimum in constant time using elementary analysis.
\end{proof}

\begin{theorem}\label{theorem:main_result}
  Given a simple $n$-gon $P$ with no holes and two points $s,t\in P$,
  a point-pair $(s^*,t^*)$ such that (i) $\overline{s^*t^*}\subset P$
  and (ii) either $|\pi(s,s^*)|+|\pi(t,t^*)|$ or
  $max\{|\pi(s,s^*)|, |\pi(t,t^*)|\}$
  is minimized can be computed in $O(n)$ time.
\end{theorem}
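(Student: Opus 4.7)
The plan is to assemble the results of the preceding lemmas into a linear-time procedure and argue completeness via Lemma~\ref{lemma:contains-vertex}. First I would handle the trivial case: if $s$ and $t$ are already mutually visible then $(s,t)$ is optimal for both variants, and visibility can be checked in $O(n)$ time (or absorbed into the shortest-path computation). Otherwise, I would compute $\pi(s,t)$ together with the shortest-path maps $M_s$ and $M_t$ and the shortest-path trees $T_s$ and $T_t$ in $O(n)$ time using the standard triangulation-based algorithm of Guibas et al.~\cite{Guibasetal1987}. By Lemma~\ref{lemma:contains-vertex}, the optimal segment $\overline{s^*t^*}$ is a line-of-sight, i.e.\ it extends to a maximal chord of $P$ tangent to $\pi(s,t)$ at some vertex $v_i$, so it suffices to search among lines-of-sight.

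Next I would invoke Lemma~\ref{lemma:InitQueue} and Lemma~\ref{lemma:BendEvents} to compute, in $O(n)$ time total, the ordered sequence of all path-, boundary-, and bend-events encountered as the line-of-sight $\ell$ rotates through $v_1, v_2, \ldots, v_{k-1}$. These events partition the rotational sweep into $O(n)$ elementary intervals (by \propone, \proptwo\ and Lemma~\ref{lemma:vertices-shortestpath}, since each event corresponds to either a vertex of $P$ being swept once or to a combinatorial change on $\pi(s,\ell)$ or $\pi(t,\ell)$, both of which happen $O(n)$ times in total). Within each such elementary interval, the identity of the rotation vertex $v_i$ and the combinatorial structure of both $\pi(s,\ell)$ and $\pi(t,\ell)$ are fixed, so Lemma~\ref{lemma:greenstarformula} applies and yields the optimal $\ell$ inside that interval in $O(1)$ time, for either the min-max or min-sum objective.

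The algorithm then walks through the event sequence once; at each interval boundary it updates the local combinatorial data (the predecessor edges $e_\ell$ on the $s$-side and on the $t$-side, the endpoints of $\ell$ on $\partial P$, the current rotation vertex), in $O(1)$ time per update using the information already recorded by Lemmas~\ref{lemma:InitQueue} and \ref{lemma:BendEvents}. Within the interval it evaluates both the interior minimum given by Lemma~\ref{lemma:greenstarformula} and the values at the two endpoint configurations, and maintains the best pair $(s^*,t^*)$ seen so far. Since all $O(n)$ intervals are processed in $O(1)$ time each, and the event computation is $O(n)$, the total running time is $O(n)$.

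The only subtle point, and the place I would spend the most care, is the completeness argument: I must be sure that every candidate guaranteed by Lemma~\ref{lemma:contains-vertex} is actually examined. This follows because (i) the sweep, by construction, visits every line-of-sight tangent to $\pi(s,t)$ exactly once, and (ii) between consecutive events the objective is an algebraic function of a single rotation parameter $\alpha$ of the form treated in Lemma~\ref{lemma:greenstarformula}, so any global optimum in the interval is either attained at one of the endpoints (an event, which we evaluate) or at the interior minimum (which we also evaluate). Taking the best candidate over all $O(n)$ intervals and all events therefore yields a global optimum, establishing the theorem for both the min-sum and the min-max variant simultaneously in $O(n)$ time.
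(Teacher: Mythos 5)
Your proposal is correct and follows essentially the same route as the paper's own proof: compute all path-, boundary-, and bend-events via Lemmas~\ref{lemma:InitQueue} and~\ref{lemma:BendEvents}, traverse the resulting $O(n)$ event sequence while applying Lemma~\ref{lemma:greenstarformula} in constant time per interval, and argue completeness through Lemma~\ref{lemma:contains-vertex} together with the fact that the combinatorial structure of $\pi(s,\ell)$ and $\pi(t,\ell)$ only changes at events. The extra details you add (the trivial mutually-visible case and explicitly evaluating interval endpoints) are harmless refinements of the same argument.
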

\begin{proof}
  Our algorithm first computes all path- and boundary-events as
  described in Lemma~\ref{lemma:InitQueue}. The number of events
  introduced during this phase is bounded by the number of vertices of
  the shortest path maps, $M_s$ and $M_t$, respectively, which are
  $O(n)$.  In the next step, it computes the bend-events on
  \boundaryOfP as described in Lemma~\ref{lemma:BendEvents}, which can
  be done in $O(n)$ time.  Finally, our algorithm traverses
    the sequence of events.  Between any two consecutive events, it
    computes the respective local optimum in constant time by
    Lemma~\ref{lemma:greenstarformula}.
    It maintains the smallest one among the local optima computed so
    far, and returns it once all events are processed. Therefore the
    running time of the algorithm is $O(n)$.
  
  For the correctness, consider the combinatorial structure of
    a solution and how it changes. The path-events ensure that all
    vertices of $\pi(s,t)$ are considered as being the vertex lying on
    the segment connecting the solution $(s^*,t^*)$
    (Lemma~\ref{lemma:contains-vertex}). While the
    line-of-sight rotates around one fixed vertex of $\pi(s,t)$,
    either the endpoints of line-of-sight sweep over or become tangent
    to a vertex of \boundaryOfP. These are exactly the
    boundary-events. Or the combinatorial structure of $\pi(s,s^*)$ or
    $\pi(t,t^*)$ changes as interior vertices of $\pi(s,s^*)$ or
    $\pi(t,t^*)$ appear or disappear. These happen exactly at
    bend-events. Therefore, our algorithm returns an optimal
    point-pair.
\end{proof}

\begin{corollary}\label{corollary:modificationsToObjective}
  By the same algorithm, one can also compute optimal pairs
  $(s^*,t^*)$ that minimize
  \begin{itemize*}
  \item $\max(\lambda|\pi(s, s^*)|,(1-\lambda)|\pi(t, t^*)|)$ for some
    $0\leq \lambda\leq 1$,
  \item $\max(\alpha+|\pi(s, s^*)|,\beta+|\pi(t, t^*)|)$ for some
    $\alpha, \beta \in \mathbb{R}^+$.
  \end{itemize*}
\end{corollary}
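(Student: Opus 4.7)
The plan is to observe that the sweep-line machinery underlying Theorem~\ref{theorem:main_result} is essentially blind to the specific objective: path-, boundary-, and bend-events are determined purely by the combinatorial structure of $\pi(s,\ell)$, $\pi(t,\ell)$, and visibility in $P$, not by how the two path-lengths are aggregated. Only two ingredients need to be revisited: the tangency property of Lemma~\ref{lemma:contains-vertex} and the constant-time local minimization of Lemma~\ref{lemma:greenstarformula}.

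For Lemma~\ref{lemma:contains-vertex}, the local-perturbation argument produces points $p$ and $q$ near $s^*$ and $t^*$ that are mutually visible and satisfy $|\pi(s,p)|<|\pi(s,s^*)|$ \emph{and} $|\pi(t,q)|<|\pi(t,t^*)|$ simultaneously. Any objective that is coordinate-wise non-decreasing in the two path-lengths and strictly decreases when both arguments strictly decrease is therefore contradicted by such a perturbation. Both $\max(\lambda|\pi(s,s^*)|,(1-\lambda)|\pi(t,t^*)|)$ for $0<\lambda<1$ and $\max(\alpha+|\pi(s,s^*)|,\beta+|\pi(t,t^*)|)$ satisfy this, so an optimal $\overline{s^*t^*}$ still contains a vertex of $\pi(s,t)$. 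The degenerate cases $\lambda\in\{0,1\}$ are handled separately: the optimum is $0$, attained at $s^*=s$ or $t^*=t$.

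For Lemma~\ref{lemma:greenstarformula}, between two consecutive events the prefixes of $\pi(s,\ell)$ up to its last interior vertex $u$ and of $\pi(t,\ell)$ up to the analogous vertex $u'$ are combinatorially fixed. Hence $|\pi(s,\ell(\theta))|=A_s+d_s(\theta)$ and $|\pi(t,\ell(\theta))|=A_t+d_t(\theta)$, where $A_s,A_t$ are constants and $d_s,d_t$ each have the form $|c_1\theta+c_2|/\sqrt{\theta^2+1}$ (or a Euclidean distance to an endpoint of $\ell$, handled analogously); I write $\theta$ for the slope parameter to avoid the notational clash with the constant $\alpha$ appearing in the corollary. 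Substituting into the two objectives yields
\[
\max\bigl(\lambda(A_s+d_s(\theta)),\,(1-\lambda)(A_t+d_t(\theta))\bigr)
\quad\text{and}\quad
\max\bigl(\alpha+A_s+d_s(\theta),\,\beta+A_t+d_t(\theta)\bigr),
\]
each a maximum of two univariate algebraic functions on a subinterval of $\mathbb{R}$. Its minimum is attained at an endpoint of the interval, at an interior stationary point of the currently dominant branch, or at a crossing of the two branches; in all three cases the minimizer is a root of a low-degree algebraic equation and can be found in $O(1)$ time by elementary analysis, exactly in the style of the original proof.

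Combining these two modifications with the unchanged event computation and $O(n)$ traversal from Theorem~\ref{theorem:main_result} yields the claimed linear-time algorithm. The main thing to verify carefully is that the bracketing into dominant-branch and crossing regimes remains a bounded, $O(1)$ case analysis for the new functions; since both expressions are piecewise smooth with a constant number of pieces (determined by the signs of the absolute values and by which branch of the max is active), this reduces to bounding the number of real roots of a low-degree polynomial and is routine.
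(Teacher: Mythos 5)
Your proposal is correct and follows exactly the route the paper intends: the corollary is stated without an explicit proof (``by the same algorithm''), and you verify precisely the two objective-dependent ingredients, namely that the perturbation/tangency argument of Lemma~\ref{lemma:contains-vertex} survives for any coordinate-wise monotone objective (with the paper's lexicographic tie-breaking still available for the one-sided-decrease sub-case you gloss over) and that the constant-time local optimization of Lemma~\ref{lemma:greenstarformula} extends to the weighted and shifted maxima, where your explicit inclusion of the additive constants $A_s,A_t$ is in fact slightly more careful than the paper's own write-up.
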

The first modification introduced in
Corollary~\ref{corollary:modificationsToObjective} models that
Romeo and Juliet travel with different speeds. It is easy to
see, that this formulation is equivalent to minimizing the objective
$\max(\alpha|\pi(s, s^*)|,\beta|\pi(t, t^*)|)$ for some
$\alpha, \beta \in \mathbb{R}^+$. The second variant can be motivated
as follows: Imagine Romeo (and Juliet) is driving a car that
before departing from $s$ (and $t$) already drove a distance of $\alpha$
(and $\beta$).  The objective
$\max(\alpha+|\pi(s, s^*)|,\beta+|\pi(t, t^*)|)$ minimizes the largest
distance any of the two cars had to drive in order to establish a
line-of-sight.

\section{Quickest Pair-Visibility Query Problem}
In this section, we consider a query version of the min-max variant of
the quickest pair-visibility problem: Preprocess a simple $n$-gon $P$
so that the minimum traveling distance for two query points $s$ and
$t$ to see each other can be computed efficiently.  We can preprocess
a simple $n$-gon in linear time and answer a query in $O(\log^2n)$
time by combining the approach in Section~\ref{section:Algorithm} with
the data structure given by Guibas and
Hershberger~\cite{Guibas1989,Hersh-shortest-1991}.  For any two query
points $s$ and $t$ in $P$, the query algorithm for their data
structure returns $\pi(s,t)$, represented as a binary tree of height
$O(\log n)$, in $O(\log n)$ time~\cite{Hersh-shortest-1991}. Thus, we
can apply a binary search on the vertices (or the edges) on $\pi(s,t)$
efficiently.

Imagine that we rotate a line-of-sight along the vertices of
$\pi(s,t)$ for two query points $s$ and $t$ in $P$.
Lemma~\ref{lemma:contains-vertex} implies that there is a
line-of-sight containing $s^*$ and $t^*$, where $(s^*,t^*)$ is an
optimal solution. We call it an \emph{optimal line-of-sight}.  We
define the order of any two lines-of-sight as the order in which they
appear during this rotational sweep process. 
By the following lemma, we
can apply a binary search on the sequence of events along \boundaryOfP
and find two consecutive events
such that the respective local optimum achieved between them
  is a global optimal solution.

\begin{lemma}
  The geodesic distance between $s$ (and $t$) and the rotating
    line-of-sight increases (and decreases) monotonically as the
    line-of-sight rotates along the vertices of $\pi(s,t)$ from $s$.
\end{lemma}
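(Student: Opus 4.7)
The plan is to prove the monotonicity of $d_s(\theta) := |\pi(s,\ell(\theta))|$ and $d_t(\theta) := |\pi(t,\ell(\theta))|$ by a piecewise analysis between consecutive events of the sweep, relying on the identity $\pi(s,\ell(\theta)) = \pi(s,\ell^-(\theta))$ (and its symmetric counterpart for $t$) from the proof of Lemma~\ref{lemma:BendEvents}, together with the properties of the sweep established in Lemma~\ref{lemma:InitQueue}. I would focus on a single interval between two consecutive events in which $\ell(\theta)$ rotates around a fixed vertex $v_i$ of $\pi(s,t)$. By construction, the combinatorial structure of $\pi(s,\ell^-(\theta))$ is constant on this interval, so we can decompose $\pi(s,\ell^-(\theta)) = \pi(s,u) \cdot \overline{uw(\theta)}$ where $u$ is a fixed vertex of $P$ and $w(\theta) \in \ell^-(\theta)$; since $|\pi(s,u)|$ is independent of $\theta$, the task reduces to showing $|uw(\theta)|$ is non-decreasing.

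In the subcase where $w(\theta)$ is the orthogonal foot of $u$ on the line through $\ell(\theta)$, I would write $|uw(\theta)| = |uv_i|\sin\phi(\theta)$, where $\phi(\theta)$ denotes the angle at $v_i$ between the fixed vector $\overrightarrow{v_iu}$ and the rotating segment $\ell^-(\theta)$. The crucial geometric observation is that the sweep direction is defined (via the event ordering on \boundaryOfP) so that $\ell^-$ sweeps the region $A_i^-$, which by property P4 is precisely the portion of $P$ whose shortest path to $t$ passes through $v_i$. Since $u$ lies on the $s$-side of $\ell$ throughout the interval ($\pi(s,u)$ does not cross $\ell$), the rotation moves $\ell^-$ away from $u$, so $\phi(\theta)$ strictly increases with $\theta$ and remains in $(0,\pi/2)$ in the interior of the interval --- the limiting values $\phi = 0$ and $\phi = \pi/2$ would correspond to bend-events (of types T1 or T2, respectively) and hence cannot occur strictly inside. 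Therefore $\sin\phi(\theta)$ is strictly increasing and so is $|uw(\theta)|$. The endpoint subcase is either trivial (when $w(\theta) \equiv v_i$, so $|uw|$ is constant) or reduces to a short planar-geometric calculation on the fixed edge of \boundaryOfP{} that contains $w(\theta)$, yielding the same non-decreasing behavior via the same directional constraint.

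Continuity of $d_s$ at each event is routine: bend-events only rearrange the internal vertices of $\pi(s,\ell^-)$ without changing its length at the event parameter, boundary-events only transfer the endpoint of $\ell^-$ across adjacent edges of \boundaryOfP, and path-events merely hand off the rotation pivot from $v_i$ to $v_{i+1}$ while $\ell$ contains $\overline{v_iv_{i+1}}$. Combining piecewise monotonicity with continuity yields the claim for $d_s$, and the analogous claim for $d_t$ follows by the symmetric argument, exchanging $s$ and $t$ (and hence $\ell^-$ with $\ell^+$, and $A_i^-$ with $A_i^+$, via property P3). The main obstacle will be the endpoint subcase of the interval analysis: while the orthogonal case reduces to an elementary trigonometric monotonicity, ensuring the correct sign in the endpoint case requires careful bookkeeping using the sub-case hypothesis (the perpendicular foot lies strictly beyond $w(\theta)$ on the line through $\ell(\theta)$) together with the geometric constraints enforced by properties P1--P4.
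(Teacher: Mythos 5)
Your within-interval analysis is broadly workable (the angle claim does hold: $u$ stays weakly on the $s$-side of the rotating line because $\pi(s,\ell^-)$ cannot leave the closure of the subpolygon $P_s$ bounded by $\ell^-$, so the angle at $v_i$ between $\overrightarrow{v_iu}$ and $\ell^-$ increases during the sweep; your identification of the limiting angles $\phi=0,\pi/2$ with bend-events of types T1/T2 is inaccurate but inessential). The genuine gap is in the gluing step across events, which you dismiss as ``routine continuity.'' Boundary-events are \emph{not} merely a transfer of the endpoint of $\ell^-$ across adjacent edges of $\partial P$: when the rotating chord passes a reflex vertex of $\partial P$ lying in its relative interior, the endpoint of $\ell^-$ jumps to a non-adjacent portion of $\partial P$, and $\ell^-$ can abruptly \emph{gain} an entire sub-segment beyond that vertex. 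At such an event the function $\theta\mapsto|\pi(s,\ell(\theta))|$ need not be continuous; it can only be shown not to jump \emph{downward}, and that is exactly the content of the lemma, not a formality. Piecewise monotonicity on the open intervals plus an unproved continuity claim therefore does not yield the statement. The same issue resurfaces in your ``endpoint subcase,'' which you yourself flag as the main obstacle and leave as an unspecified planar calculation: there the quantity to control is $|u\,x(\theta)|$ with $x(\theta)$ sliding along a boundary edge, and its monotonicity is again not an elementary trigonometric fact but needs a separation argument.

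The missing idea is the one the paper uses, and it makes the piecewise machinery unnecessary: for a line-of-sight $\ell$ tangent at $v$, let $P_s$ be the subpolygon of $P\setminus\ell$ containing $s$ (it is bounded by $\ell^-$ and part of $\partial P$, since the polygon's reflex wedge at $v$ separates the $s$-side from the $t$-side below the chord). One shows that \emph{every} line-of-sight $\ell'$ occurring later in the sweep --- whether tangent at the same vertex or at a later vertex of $\pi(s,t)$ --- avoids the interior of $P_s$; consequently $\pi(s,\ell')$ must cross $\ell$, giving $|\pi(s,\ell')|\ge|\pi(s,\ell)|$ directly, uniformly over path-, boundary- and bend-events (in particular it is precisely what rules out the downward jump at the boundary-events above, since the newly gained portion of the chord lies outside the closure of $P_s$). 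If you want to salvage your approach, you must prove this separation property anyway, at which point the interval-by-interval trigonometry buys you nothing.
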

\begin{proof}
  Let $\ell$ be a line-of-sight that is tangent to $\pi(s,t)$
    at a vertex $v$. Consider the subdivision of $P$ induced by $\ell$
    and let $P_s$ be the subpolygon that contains $s$.
    Let $\ell'$ be a line-of-sight that comes after $\ell$ during the
    rotational sweep process.  We claim that $\ell'$ does not
    intersect the interior of $P_s$.  If $\ell'$ is tangent to
    $\pi(s,t)$ at $v$, it never intersects the interior of $P_s$ as
    shown in the proof of Lemma~\ref{lemma:InitQueue}.  Assume that
    $\ell'$ is tangent to $\pi(s,t)$ at a vertex $u$ that comes after
    $v$ along $\pi(s,t)$ from $s$, but intersects the interior of
    $P_s$. Without loss of generality, assume that $\ell$ is
    horizontal and $P_s$ lies locally below $\ell$.  Then $u$ must lie
    strictly above the line containing $\ell$.  However, since both
    $v$ and $u$ are vertices of $\pi(s,t)$ and $\ell$ is tangent to
    $\pi(s,t)$ at $v$, there must be another vertex $u'$ of $\pi(s,t)$
    that lies on or below the line containing $\ell$ and appears
    between $v$ and $u$ along $\pi(s,t)$. See
    Figure~\ref{fig:Lemma10}.
		
    \begin{figure}[!t]
      \centering
      \includegraphics[width=0.25\textwidth]{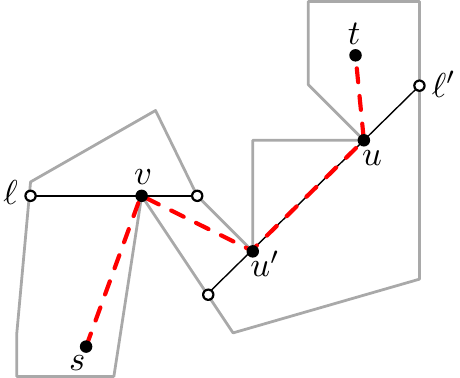}
      \caption{\small Let $\ell$ be a be a
        line-of-sight which is tangent to
        $\pi(s,t)$ at a vertex $v$. And let
        $\ell'$ be be a line-of-sight that comes
        after $\ell$ during the rotational sweep
        process.  Clearly,
        $|\pi(s,\ell')|\geq|\pi(s,\ell)|$.}
      \label{fig:Lemma10}
    \end{figure}
    
    Thus, $u$ is not visible from any point on $\ell$, and $\ell'$
    does not intersect the interior of $P_s$.  Since $\pi(s,\ell')$
    intersects $\ell$, we have $|\pi(s,\ell')|\geq|\pi(s,\ell)|$. The
    claim for $t$ and the rotating line-of-sight can be shown
    analogously.
\end{proof}

\subsection{Binary Search for the Path-Events}
We first consider the path-events, and find two consecutive
path-events containing an optimal line-of-sight between them.  Let
  $ \langle v_{0},v_{1},\ldots, v_{k-1},v_{k}\rangle$ be the sequence of vertices on
  $\pi(s,t)$ with $s=v_0$ and $t=v_k$. Due to the
shortest-path data structure by Guibas and Hershberger, we can obtain
$\pi(s,t)$ represented as a binary tree of height $O(\log n)$ in
$O(\log n)$ time.  Consider an edge $\overline{v_iv_{i+1}}$ of
$\pi(s,t)$. We can determine whether or not an optimal line-of-sight
is tangent to $\pi(s,t)$ at a vertex lying after $v_i$ along
$\pi(s,t)$ in $O(\log n)$ time. To do this, we compute the
line-of-sight $\ell$ containing $\overline{v_iv_{i+1}}$ in $O(\log n)$
time. We use the data structure for ray shooting given by Hershberger and Suri~\cite{Hershberger1995} with linear preprocessing and logarithmic query time. Then, we compute the length of $\pi(s,\ell)$
and $\pi(t,\ell)$ in $O(\log n)$ time \ccheck{using the data structure given by Guibas and Hershberger for computing the distance between a query point and a query line segment in $O(\log n)$ time~\cite{Guibas1989}.} An optimal
line-of-sight is tangent to $\pi(s,t)$ at a vertex lying
after $v_i$ if and only if $\pi(s,\ell)$ is shorter than
$\pi(t,\ell)$. Therefore, we can compute the two consecutive
path-events with an optimal solution lying between them in
$O(\log^2 n)$ time.

\subsection{Binary Search for the Boundary-Events}
Now we have the vertex $v_i$ of $\pi(s,t)$ contained in an optimal
line-of-sight.  We find two consecutive boundary-events
  defined by lines-of-sight tangent to $\pi(s,t)$ at $v_i$ such that
an optimal line-of-sight lies between them.  Let $\tilde{x}_i$ and
$x_i$ be the first points of \boundaryOfP hit by the rays from any
point in $\overline{v_{i-1}v_{i}}$ towards $v_{i-1}$ and $v_i$,
respectively. See Figure~\ref{fig:queue_init}. Similarly, let
$\tilde{x}_{i+1}$ and $x_{i+1}$ be the first points of \boundaryOfP
hit by the rays from any point in $\overline{v_{i}v_{i+1}}$ towards
$v_{i}$ and $v_{i+1}$, respectively. These four points of
  \boundaryOfP can be found in $O(\log n)$ time by the ray-shooting
  data structure~\cite{Hershberger1995}.  Without loss of generality,
we assume that a line-of-sight rotates around $v_i$ in the
counter-clockwise direction in the rotational sweep process.  Let
$\tilde{\gamma}$ be the part of \boundaryOfP lying between $\tilde{x}_i$
and $\tilde{x}_{i+1}$ in counter-clockwise order, and $\gamma$ be the
part of \boundaryOfP lying between $x_i$ and $x_{i+1}$ in
counter-clockwise order.  An optimal line-of-sight $\ell^*$ has one
endpoint on $\tilde{\gamma}$ and the other endpoint on $\gamma$.

We first find the edge of $\tilde{\gamma}$ (resp. $\gamma$) containing
an endpoint of $\ell^*$ by applying a binary search on the vertices of
$\tilde{\gamma}$ (resp. $\gamma$).  This gives two consecutive
boundary-events such that $\ell^*$ lies between them.  We now show how
to find the edge of $\gamma$ containing an endpoint of $\ell^*$.
The edge on $\tilde{\gamma}$ can be found analogously.

We perform a binary search on the vertices in $\gamma$ as follows.
Let $x^*$ be the endpoint of $\ell^*$ contained in $\gamma$.  For any
vertex $u$ of $\gamma$, we can determine which part of $\gamma$ with
respect to $u$ contains $x^*$ in $O(\log n)$ time.  To do
  this, we consider the line-of-sight $\ell$ containing the edge of
  $\pi(v_i,u)$ incident to $v_i$.  Observe that $\ell$ intersects
  $\pi(v_i,u)$ only in the edge including its endpoints as
  $\pi(v_i,u)$ is a shortest path.
See Figure~\ref{fig:query-bend}(a).  Since we can obtain the edge of
$\pi(v_i,u)$ incident to $v_i$ in $O(\log n)$ time using the
shortest-path data structure, we can obtain $\ell$ in the same
time. Here, to obtain the endpoint of $\ell$ on $\gamma$, we
use the ray-shooting data structure that supports $O(\log n)$ query
time~\cite{Hershberger1995}.  Then we compare $|\pi(s,\ell)|$ and
$|\pi(t,\ell)|$ in $O(\log n)$ time.  The point $x^*$ comes after $u$
from $x_i$ if and only if $|\pi(s,\ell)|<|\pi(t,\ell)|$.  Therefore,
we can determine which part of $\gamma$ with respect to $u$ contains
$x^*$ in $O(\log n)$ time, and thus the binary search is completed in
$O(\log^2 n)$ time.  In this way, we can compute two consecutive
boundary-events such that an optimal line-of-sight lies between them
in $O(\log^2 n)$ time.

\subsection{Binary Search for the Bend-Events}
Now we have two consecutive events in the sequence of all path- and
boundary-events that contain an optimal line-of-sight $\ell^*$ between
them.  Let $\ell_1$ and $\ell_2$ be two lines-of-sight corresponding
to the two consecutive events such that $\ell_2$ comes after $\ell_1$.
The remaining task is to handle the bend-events lying between them.
For the bend-events, we perform a binary search on the edges of
$\pi(s,\ell_1)\cup\pi(s,\ell_2)$ in $O(\log^2 n)$ time.  Then we
perform a binary search on the edges of
$\pi(t,\ell_1)\cup\pi(t,\ell_2)$ in $O(\log^2 n)$ time.  In the
following, we describe the binary search on
$\pi(s,\ell_1)\cup\pi(s,\ell_2)$.  The other one can be done
analogously.

We find the point $s'$ such that $\pi(s,s')$ is the maximal
  common subpath of $\pi(s,\ell_1)$ and $\pi(s,\ell_2)$ from $s$ in
$O(\log n)$ time using the shortest-path data
structure~\cite{Hersh-shortest-1991}. See
Figure~\ref{fig:query-bend}(b).  Then we obtain
$\pi'=\pi(s',\ell_1)\cup\pi(s',\ell_2)$ represented as a binary tree
of height $O(\log n)$ in $O(\log n)$ time.  Notice that
  $\pi'$ is a path from $\ell_1$ to $\ell_2$, concatenating the two
  shortest paths from $\ell_1$ to $s'$ and from $s'$ to $\ell_2$.

For an edge $e$ of $\pi'$, 
we use $\ell(e)$ to denote the line-of-sight containing $v_i$ and
orthogonal to the line containing $e$.
Observe that $\ell(e)$ comes after $\ell(e')$ if and only if $e$ comes
after $e'$ along $\pi'$ 
from $\ell_1$ (because the order of the edges of
  $\pi'$, as they appear on the path, are radially sorted around
  $v_i$). Also, given an edge $e$ of
$\pi'$, 
we can compute
$\ell(e)$ in constant time.  Using these properties, we can find two
consecutive edges $e$ and $e'$ of
$\pi'$ 
such that $\ell^*$ lies between $\ell(e)$ and $\ell(e')$ in $O(\log^2
n)$ time by applying a binary search on
$\pi'$ 
as we did for path- and boundary-events.

Now we have two consecutive events in the sequence of all
  path-, boundary- and bend-events that contain
  $\ell^*$ between them.  Recall that the combinatorial structure of
$\pi(s,\ell)$ (and
$\pi(t,\ell)$) is the same for every line-of-sight lying between the
two events. Let $(u_s,w_s)$ and
  $(u_t,w_t)$ be the edges of $\pi(s,\ell)$ and
  $\pi(t,\ell)$ incident to $\ell$ at $w_s$ and
  $w_t$, respectively, for any line-of-sight
  $\ell$ lying between the two events.  Using the shortest-path data
structure, we can obtain $u_s, u_t, |\pi(s,u_s)|$ and
$|\pi(t,u_t)|$ in $O(\log
n)$ time.  Then we apply the algorithm in
Lemma~\ref{lemma:greenstarformula} to find an optimal line-of-sight in
constant time. In this way, we can obtain an optimal line-of-sight in
$O(\log^2n)$ time in total.

\begin{figure}
  \centering
  \includegraphics[width=0.8\textwidth]{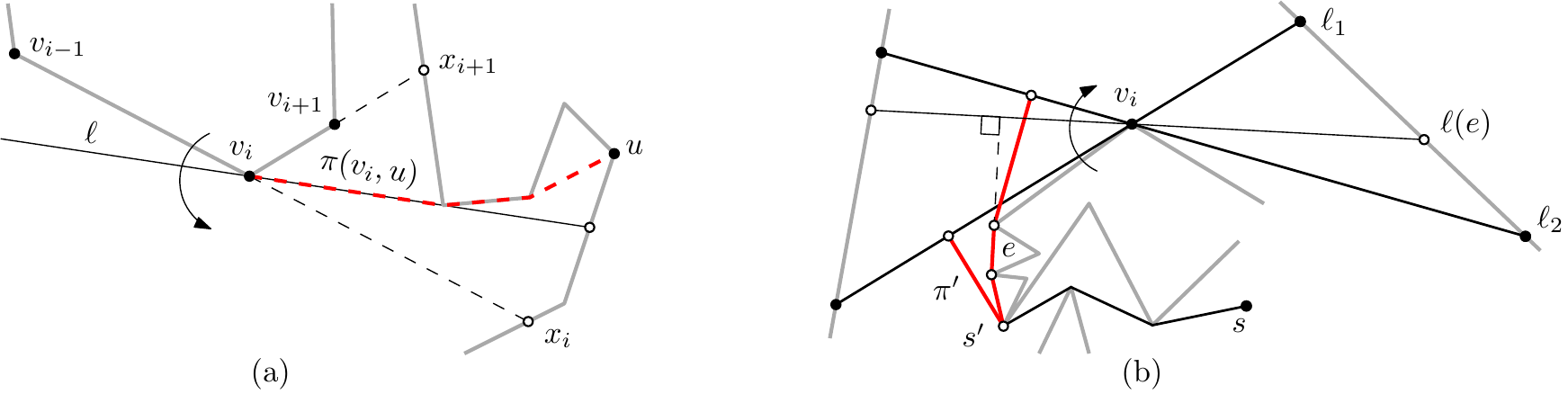}
  \caption{(a) The line-of-sight intersecting $\pi(v_i,u)$
      contains the edge of $\pi(v_i,u)$ incident to $v_i$. (b) The
      maximal common subpath of $\pi(s,\ell_1)$ and $\pi(s,\ell_2)$
      from $s$ is $\pi(s,s')$; $\pi'=\pi(s',\ell_1)\cup\pi(s',\ell_2)$
      (blue). }
  \label{fig:query-bend}
\end{figure}

Therefore, we can find two consecutive events with an optimal solution
between them, and we can obtain an optimal solution in $O(\log^2 n)$
time in total.

\begin{theorem}\label{theorem:query}
  Given a simple $n$-gon $P$, we can preprocess it in $O(n)$ time to
  find the minimum of the longer distance that $s$ and $t$ travel in
  order to see each other in $P$ can be computed in $O(\log^2 n)$ time
  for any two query points $s,t\in P$.
\end{theorem}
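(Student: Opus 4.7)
The plan is to replace the linear sweep of Theorem~\ref{theorem:main_result} by a three-stage binary search that localises an optimal line-of-sight $\ell^{*}$ within the event sequence. In the preprocessing phase I build, in $O(n)$ total time, the Guibas--Hershberger shortest-path data structure~\cite{Guibas1989,Hersh-shortest-1991}, which for any two query points returns $\pi(s,t)$ encoded as a balanced binary tree of height $O(\log n)$ in $O(\log n)$ time and also answers distance-from-a-point-to-a-query-segment queries in $O(\log n)$ time, together with the Hershberger--Suri ray-shooting structure~\cite{Hershberger1995}, which answers ray shots in $O(\log n)$ time. No additional preprocessing is needed.

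At query time, I first invoke the shortest-path structure to get $\pi(s,t)$ as a balanced tree. Using the monotonicity lemma (so that among all lines-of-sight tangent to $\pi(s,t)$ at successive vertices, $|\pi(s,\ell)|$ increases and $|\pi(t,\ell)|$ decreases), I carry out a binary search over the vertices $v_0,\ldots,v_k$ to find the edge $\overline{v_i v_{i+1}}$ such that an optimal line-of-sight is tangent at some vertex of the subpath containing $v_i$. Each comparison step computes the line-of-sight $\ell$ through $\overline{v_i v_{i+1}}$ by two ray shots, then evaluates $|\pi(s,\ell)|$ and $|\pi(t,\ell)|$ via the Guibas--Hershberger structure, totalling $O(\log n)$ time per step and $O(\log^2 n)$ overall. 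Thus I isolate two consecutive path-events whose interval contains $\ell^{*}$.

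Next I localise $\ell^{*}$ among the boundary-events tangent at the identified vertex $v_i$. As described in the subsection \emph{Binary Search for the Boundary-Events}, the candidate endpoints of $\ell^{*}$ lie on two arcs $\gamma,\tilde\gamma$ of $\partial P$ whose endpoints are obtained by four ray shots. I then binary search the vertices on $\gamma$: for a candidate vertex $u$, I use the shortest-path structure to read off the edge of $\pi(v_i,u)$ incident to $v_i$ in $O(\log n)$, extend it by one ray shot to the line-of-sight $\ell$, and compare $|\pi(s,\ell)|$ with $|\pi(t,\ell)|$ to decide on which side of $u$ the optimum lies; the same is done on $\tilde\gamma$. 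This stage also runs in $O(\log^2 n)$. Finally, within the two boundary-events bracketing $\ell^{*}$, I handle bend-events by forming the $O(\log n)$-height tree representing $\pi'=\pi(s',\ell_1)\cup\pi(s',\ell_2)$ (where $s'$ is the last common vertex of the two bracketing shortest paths), and binary-searching its edges using exactly the radial monotonicity argument: for an edge $e$ the candidate line-of-sight $\ell(e)$ through $v_i$ orthogonal to $e$ is computed in $O(1)$, and one comparison of $|\pi(s,\ell(e))|$ to $|\pi(t,\ell(e))|$ via the data structure, in $O(\log n)$, determines the side of $e$ containing $\ell^{*}$; the analogous search is then performed on the $t$-side. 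Once two consecutive events in the combined sequence of path-, boundary- and bend-events bracket $\ell^{*}$, the combinatorial structures of $\pi(s,\ell)$ and $\pi(t,\ell)$ are fixed, so after retrieving the second-to-last vertices and their distances from $s$ and $t$ in $O(\log n)$ time I apply Lemma~\ref{lemma:greenstarformula} to compute $\ell^{*}$ in $O(1)$.

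Correctness follows because each binary search exploits the monotonicity of $|\pi(s,\ell)|$ and $|\pi(t,\ell)|$ along the rotational sweep to decide unambiguously on which side of a tested event the (unique up to ties) balancing position lies, and Lemma~\ref{lemma:contains-vertex} guarantees that an optimum is realised by a line-of-sight tangent at some vertex of $\pi(s,t)$. The main difficulty I anticipate is the bend-event stage: the bend-events are not stored anywhere, yet I must binary-search over them without listing them explicitly. The key insight handling this is that the bend-events bracketed between the two chosen boundary-events correspond bijectively to edges of the geodesic path $\pi'$, whose balanced-tree representation is available in $O(\log n)$ from the shortest-path data structure; this turns an implicit search into a search on an explicit balanced tree and gives the final $O(\log^2 n)$ query time.
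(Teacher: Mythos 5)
Your proposal matches the paper's proof essentially step for step: linear preprocessing with the Guibas--Hershberger shortest-path structure and the Hershberger--Suri ray-shooting structure, then a three-stage $O(\log^2 n)$ binary search over path-, boundary-, and bend-events driven by the monotonicity of $|\pi(s,\ell)|$ and $|\pi(t,\ell)|$ along the rotational sweep, including the key device of searching bend-events implicitly via the edges of $\pi'=\pi(s',\ell_1)\cup\pi(s',\ell_2)$ and finishing with Lemma~\ref{lemma:greenstarformula}. The argument is correct and is the same approach as the paper's.
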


\section{Conclusions and Open Problems}
We have presented a linear time algorithm that solves two variants of
the quickest pair-visibility problem for a simple polygon: either we
want to minimize the maximum length of a traveled path or we want to
minimize the sum of the lengths of both traveled paths.
	
Additionally, we have considered a query version of the
quickest-visibility problem for the min-max variant. We can preprocess
a simple $n$-gon in linear time so that the minimum of the longer
distance the two query points travel can be computed in $O(\log^2 n)$
time for any two query points.

We conclude this paper with some interesting open problems.
\begin{enumerate}
\item Is there a way to extend our algorithm to more than two query
  points? More precisely, given $k$ points in a simple
    polygon, compute the minimum distance that these points must
    travel in order to see each other (at the same moment).
\item Find an efficient algorithm for the query version of the
  quickest-visibility problem for the min-sum problem.
\end{enumerate}

\section*{Acknowledgments}
This research was initiated at the 19th Korean Workshop on Computational Geometry in W\"urzburg, Germany.

\bibliographystyle{elsarticle-num}

\bibliography{romeojulietpaper} 

\clearpage

\end{document}